\pgfplotsset{compat=1.3}
\title{\sysname: High Performance PMEM Hash Tables\\ Through Stability and Low Associativity}
\author{Prashant Pandey}
\email{pandey@cs.utah.edu}
\affiliation{University of Utah}
\author{Michael A.~Bender}
\email{bender@cs.stonybrook.edu}
\affiliation{Stony Brook University}
\author{Alex Conway}
\email{aconway@vmware.com}
\affiliation{VMware Research}
\author{Mart\'{\i}n Farach-Colton}
\email{farach@rutgers.edu}
\affiliation{Rutgers University}
\author{William Kuszmaul}
\email{kuszmaul@mit.edu}
\affiliation{MIT}
\author{Guido Tagliavini}
\email{guido.tag@rutgers.edu}
\affiliation{Rutgers University}
\author{Rob Johnson}
\email{robj@vmware.com}
\affiliation{VMware Research}
\begin{document}
\fancyhead{}


\begin{abstract}
  Modern hash table designs strive to minimize space while maximizing
  speed. The most important factor in speed is the number of cache
  lines accessed during updates and queries.  This is especially
  important on PMEM, which is slower than DRAM and in which writes are
  more expensive than reads.

  This paper proposes two stronger design objectives: stability and
  low-associativity.  A stable hash table doesn't move items around,
  and a hash table has low associativity if there are only a few
  locations where an item can be stored.  Low associativity ensures
  that queries need to examine only a few memory locations, and
  stability ensures that insertions write to very few cache lines.
  Stability also simplifies scaling and crash safety.

  We present \sysname, a fast, crash-safe, concurrent, and
  space-efficient hash table for PMEM based on the design principles
  of stability and low associativity.  \Sysname \rev{R4D9}{combines in-memory metadata} with a new
  hashing technique, iceberg hashing, that is (1) space efficient, (2) stable,
  and (3) supports low associativity. \rev{R4D21}
  In contrast, existing hash-tables either modify numerous
  cache lines during insertions (e.g. cuckoo hashing), access numerous
  cache lines during queries (e.g. linear probing), or waste space
  (e.g. chaining).  Moreover, the combination of (1)-(3) yields
  several emergent benefits: \sysname scales better than other hash
  tables, supports crash-safety, and has excellent performance on PMEM
  (where writes are particularly expensive).

  In our benchmarks, \sysname inserts are 50\% to $3\times$ faster
  than state-of-the-art PMEM hash tables Dash and CLHT and queries are
  20\% to $2\times$ faster.  \sysname space overhead is 17\%, whereas
  Dash and CLHT have space overheads of $2\times$ and $3\times$,
  respectively.  \Sysname also exhibits linear scaling and is crash
  safe.  In DRAM, \sysname outperforms state-of-the-art hash tables
  libcuckoo and CLHT by almost $2\times$ on insertions while offering
  good query throughput and much better space efficiency.

\end{abstract}

\maketitle


\section{Introduction}


\newcommand{\figwidth}{0.49\textwidth}
\pgfplotsset{
  GenericLinePlot/.style={
  },
}

\pgfplotsset{
  ResizeRSSPlot/.style={
    small,
    width = \linewidth + 15pt,
    height=0.75\linewidth,
    xlabel={Number of insertions (M)},
    ylabel={Max RSS (GB)},
    xlabel near ticks,
    x label style={at={(0.5,-0.12)},font=\small},
    scaled ticks=false,
    scaled x ticks = false,
    ylabel near ticks,
    y label style={at={(-0.1,0.5)},font=\small},
    ytick style={draw=none},
    scaled y ticks = base 10:-6,
    ytick scale label code/.code = {},
    yticklabel style={ /pgf/number format/fixed },
    ymajorgrids,
    yminorgrids,
    ymin=0,
    minor tick num=1,
    minor grid style={draw=gray!25},
  },
}

\pgfplotsset{
  RSSPlot/.style={
    small,
    width = \linewidth + 15pt,
    height=0.75\linewidth,
    xlabel={Load factor},
    ylabel={Max RSS (GB)},
    xlabel near ticks,
    x label style={at={(0.5,-0.12)},font=\small},
    scaled ticks=false,
    ylabel near ticks,
    y label style={at={(-0.1,0.5)},font=\small},
    ytick style={draw=none},
    scaled y ticks = base 10:-6,
    ytick scale label code/.code = {},
    yticklabel style={ /pgf/number format/fixed },
    ymajorgrids,
    yminorgrids,
    ymin=0,
    minor tick num=1,
    minor grid style={draw=gray!25},
  },
}

\pgfplotsset{
  InstanSePlot/.style={
    small,
    width = \linewidth,
    height=0.6\linewidth,
    xlabel={Space efficiency},
    ylabel={Throughput (M/s)},
    xlabel near ticks,
    x label style={at={(0.5,-0.12)},font=\small},
    scaled ticks=false,
    ylabel near ticks,
    y label style={at={(-0.1,0.5)},font=\small},
    ytick style={draw=none},
    scaled y ticks = base 10:-6,
    ytick scale label code/.code = {},
    yticklabel style={ /pgf/number format/fixed },
    ymajorgrids,
    yminorgrids,
    ymax=20000000,
    ymin=0,
    minor tick num=1,
    minor grid style={draw=gray!25},
  },
}

\pgfplotsset{
  InstanPlot/.style={
    small,
    width = \linewidth + 15pt,
    height=0.75\linewidth,
    xlabel={Load factor},
    ylabel={Throughput (M/s)},
    xlabel near ticks,
    x label style={at={(0.5,-0.12)},font=\small},
    scaled ticks=false,
    ylabel near ticks,
    y label style={at={(-0.1,0.5)},font=\small},
    ytick style={draw=none},
    scaled y ticks = base 10:-6,
    ytick scale label code/.code = {},
    yticklabel style={ /pgf/number format/fixed },
    ymajorgrids,
    yminorgrids,
    ymin=0,
    minor tick num=1,
    minor grid style={draw=gray!25},
  },
}

\pgfplotsset{
  HistoPlot/.style={
    footnotesize,
    ybar = 1pt,
    bar width = 3pt,
    width = \linewidth + 30pt,
    height = 1.8in,
    ymin = 0,
    xmin = 250,
    xmax = 20000,
    xtick=data,
    xticklabels={250\,\si{\nano\second}, 500\,\si{\nano\second}, 1\,\si{\micro\second}, 2.5\,\si{\micro\second}, 5\,\si{\micro\second}, 10\,\si{\micro\second}, >10\,\si{\micro\second}},
    x tick label style = {rotate = 45, anchor = 15, xshift = 2pt},
    ytick scale label code/.code = {},
    yticklabel style={
        /pgf/number format/fixed,
    },
    scaled y ticks = base 10:-6,
    enlarge x limits = 0.08,
    fill,
    no marks,
    nodes near coords,
    every node near coord/.append style={rotate = 90, anchor = west, font=\tiny,/pgf/number format/fixed relative,/pgf/number format/precision=2},
    nodes near coords={
       \pgfkeys{/pgf/fpu}%
       \pgfmathparse{\pgfplotspointmeta / 640000}%
       \pgfmathprintnumber{\pgfmathresult}\%
    },
  },
}

\pgfplotsset{
  IntroBarPlotWrite/.style={
    footnotesize,
    ybar = 2pt,
    bar width = 5pt,
    width = \linewidth + 30pt,
    height = 1.5in,
    no marks,
    ylabel ={Throughput(M/s)},
    ymin = 0,
    symbolic x coords = {ins, del},
    xticklabels = {{}, {}, Insertion, {}, {}, {}, {}, Deletion},
    x tick label style = {font = \small, align = center},
    major x tick style = transparent,
    enlarge x limits = 0.3,
    ymajorgrids,
    yminorgrids,
    scaled ticks=false,
    scaled y ticks = base 10:-6,
    ytick scale label code/.code = {},
    minor y tick num = 1,
    fill,
  },  
}

\pgfplotsset{
  IntroBarPlotRead/.style={
    footnotesize,
    ybar = 2pt,
    bar width = 5pt,
    width = \columnwidth + 30pt,
    height = 1.5in,
    no marks,
    ymin = 0,
    symbolic x coords = {pos, neg},
    xticklabels = {{}, {}, Pos Query, {}, {}, {}, {}, Neg Query},
    x tick label style = {font = \small, align = center},
    major x tick style = transparent,
    enlarge x limits = 0.3,
    ymajorgrids,
    yminorgrids,
    scaled ticks=false,
    scaled y ticks = base 10:-6,
    ytick scale label code/.code = {},
    minor y tick num = 1,
    fill,
  },  
}

\pgfplotsset{
  YcsbBarPlotLa/.style={
    footnotesize,
    ybar = 2pt,
    bar width = 5pt,
    width = \linewidth + 30pt,
    height = 1.5in,
    no marks,
    ymin = 0,
    symbolic x coords = {load, runa},
    xticklabels = {{}, {}, Load, {}, {}, {}, {}, Run A},
    x tick label style = {font = \small, align = center},
    major x tick style = transparent,
    enlarge x limits = 0.3,
    ymajorgrids,
    yminorgrids,
    scaled ticks=false,
    scaled y ticks = base 10:-6,
    ytick scale label code/.code = {},
    minor y tick num = 1,
    fill,
  },  
}

\pgfplotsset{
  YcsbBarPlotBc/.style={
    footnotesize,
    ybar = 2pt,
    bar width = 5pt,
    width = \columnwidth + 30pt,
    height = 1.5in,
    no marks,
    ymin = 0,
    symbolic x coords = {runb, runc},
    xticklabels = {{}, {}, Run B, {}, {}, {}, {}, Run C},
    x tick label style = {font = \small, align = center},
    major x tick style = transparent,
    enlarge x limits = 0.3,
    ymajorgrids,
    yminorgrids,
    scaled ticks=false,
    scaled y ticks = base 10:-6,
    ytick scale label code/.code = {},
    minor y tick num = 1,
    fill,
  },  
}

\pgfplotsset{
  LatencyPlot/.style={
    small,
    width = \linewidth + 15pt,
    height=0.75\linewidth,
    xlabel={Latency (ns)},
    ylabel={CDF},
    xlabel near ticks,
    x label style={at={(0.5,-0.12)},font=\small},
    scaled ticks=false,
    ylabel near ticks,
    y label style={at={(-0.1,0.5)},font=\small},
    ytick style={draw=none},
    ytick scale label code/.code = {},
    yticklabel style={ /pgf/number format/fixed },
    ymajorgrids,
    yminorgrids,
    minor tick num=1,
    minor grid style={draw=gray!25},
  },
}

\pgfplotsset{
  MicroPlot/.style={
    small,
    width = \linewidth + 15pt,
    height=0.8\linewidth,
    xlabel={Threads},
    ylabel={Throughput (M/s)},
    xlabel near ticks,
    x label style={at={(0.5,-0.12)},font=\small},
    ylabel near ticks,
    y label style={at={(-0.1,0.5)},font=\small},
    xmin = 0,
    xmax = 17,
    xtick = data,
    ymin = 0,
    scaled y ticks = base 10:-6,
    ytick style={draw=none},
    ytick scale label code/.code = {},
    yticklabel style={ /pgf/number format/fixed },
    ymajorgrids,
    yminorgrids,
    minor tick num=1,
    minor grid style={draw=gray!25},
  },
}

\pgfplotsset{
  YCSBPlot/.style={
    MicroPlot,
  },
}

\pgfplotsset{
  EvalLinePlot/.style={
    GenericLinePlot,
    height=0.61\columnwidth,
    xlabel={Load Factor},
    ylabel={Throughput (Millions/second)},
    xmin=0,
    xmax=100,
    minor x tick num=0,
    ymin=0,
    minor y tick num=1,
    grid=both,
  },
}

\pgfplotsset{
  PerfLinePlot/.style={
    EvalLinePlot,
    ymax=40,
  },
}

\pgfplotsset{
  PerfCacheLinePlot/.style={
    EvalLinePlot,
    ymax=80,
  },
}

\pgfplotsset{
  AggregatePlot/.style={
    footnotesize,
    ybar,
    bar width = 4.5,
    no marks,
    width=\columnwidth,
    height=0.7\columnwidth,
    xlabel near ticks,
    ylabel near ticks,
    ymin = 0,
    symbolic x coords = {insert, exist, false, remove},
    xticklabels = {{}, Insertion, Positive\\Lookup, Random\\Lookup, Deletion},
    xticklabel style = {align = center},
    major x tick style = transparent,
    enlarge x limits = 0.2,
    ymajorgrids,
    yminorgrids,
    minor y tick num = 1,
    fill,
    nodes near coords,
    every node near coord/.append style =
    {
      rotate = 90,
      anchor = west,
      fill = white,
      outer sep = 0,
      inner sep = 1,
      xshift = 1pt,
      /pgf/number format/fixed,
      /pgf/number format/precision = 1,
      font=\scriptsize
    }
  },
}

\pgfplotsset{
  RAMAggregatePlot/.style={
    AggregatePlot,
    ymax=50,
  },
  CacheAggregatePlot/.style={
    AggregatePlot,
    ymax=200,
  },
}

\pgfplotsset{
  IcebergStyle/.style = {color = cyan!75!blue, mark = *},
  DashStyle/.style    = {color = orange,       mark = square*},
  CuckooStyle/.style  = {color = red,          mark = diamond*},
  TBBStyle/.style     = {color = violet,       mark = triangle*},
  CLHTStyle/.style     = {color = green!75!black,       mark = pentagon*},
}

\begin{figure}[t]
  \centering
  \ref{intro-bar-legend}
  \begin{subfigure}{0.45\columnwidth}
  \begin{tikzpicture}
    \begin{axis}[
        IntroBarPlotWrite,
        legend entries = {IcebergHT, Dash, CLHT},
        legend columns = 3,
        legend cell align = {left},
        legend to name={intro-bar-legend}
      ]
      \addplot[IcebergStyle, fill]    table {data/iceberg_micro_pmem_bar_write.csv};
      \addplot[DashStyle, fill]    table {data/dash_micro_pmem_bar_write.csv};
      \addplot[CLHTStyle, fill]    table {data/clht_lb_res_micro_pmem_bar_write.csv};
    \end{axis}
  \end{tikzpicture}
\end{subfigure}
  \hspace{10pt}
\begin{subfigure}{0.45\columnwidth}
  \begin{tikzpicture}
    \begin{axis}[
        IntroBarPlotRead,
      ]
      \addplot[IcebergStyle, fill]    table {data/iceberg_micro_pmem_bar_read.csv};
      \addplot[DashStyle, fill]    table {data/dash_micro_pmem_bar_read.csv};
      \addplot[CLHTStyle, fill]    table {data/clht_lb_res_micro_pmem_bar_read.csv};
    \end{axis}
  \end{tikzpicture}
\end{subfigure}
  \label{fig:throughput-intro}
  \caption{Throughput for insertions, deletions, and queries (positive and
  negative) using 16 threads for PMEM hash tables. The throughput is
  computed by inserting $0.95N$ keys-value pairs where $N$ is the
  initial capacity of the hash table. (Throughput is Million ops/second)}
  \vspace{-1.5em}
\end{figure}
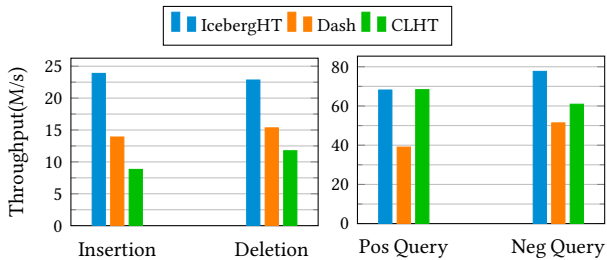

Hash tables are a core data structure in many applications, including
key-value stores, databases, and big-data-analysis engines, and are
included in most standard libraries.
Hash-table performance can be a substantial bottleneck for many applications~\cite{NealZu21,FanAn13,MetreveliZe12}.





With the advent of persistent-memory (PMEM) hardware, such as Intel Optane,
designing PMEM hash tables
has become an active field of research~\cite{MaierSaDe19, MaierSaDe16,
LuHaWa20,pfht,path-hashing,level-hashing,cceh,nvc-hashmap,LeeEtAl19-Recipe,HuCh21,phprx}.
Optane is cheaper than DRAM, enabling larger data
sets, but it is slower, with write bandwidth being roughy $2-5\times$ slower
than read bandwidth~\cite{pmem-measurements,YangKi20}.
Hash tables must be specifically designed for PMEM in order to achieve
both high performance and crash safety.

Despite several years of research on PMEM hash tables,
state-of-the-art PMEM hash tables---such as Dash~\cite{LuHaWa20} and
CLHT~\cite{david2015asynchronized}---utilize less than 35\%  of PMEM's raw
throughput for at least one of insertions and queries (see
\Cref{fig:throughput-dram-bar})
%
%
Furthermore, Dash and CLHT have space overheads of 2-3$\times$ (see
\Cref{tab:intro-space}).

In this paper, we introduce a new hash table, \textbf{\sysname}, that
is able to achieve over 60-70\% of the PMEM hardware throughput on both
insertions and queries, scales easily with additional threads, is
crash safe, and has space efficiency of over 85\% (i.e. space overhead
is less than $1/0.85\approx 17\%$).  \rev{R4D17}{\sysname also integrates easily
with existing PMEM software infrastructure  without requiring custom
allocators or PMDK implementations, whereas CLHT requires
custom support libraries.}





The design of PMEM (and other) hash tables typically involves
developing a hash table algorithm that minimizes read and write
amplification.\footnote{The \defn{write amplification} is the amount
  of data written to PMEM per insert/delete divided by the amount of data
  inserted/deleted.
  Similarly, the
   \defn{read amplification} is the amount of data read from PMEM per
   query divided by the amount of data output by the query.
}
In this paper, we argue that two stricter criteria,
\emph{referential stability} and \emph{low associativity} should be
optimized to yield high performance on PMEM.
%
%
As we will see, these two goals seem to be at odds with each other,
and part of the innovation  of our hash table design is that it
simultaneously achieves both.
Naturally, the third design goal for a high-performance hash table is \emph{compactness}, but compactness also seems at odds with referential stability and low associativity.

A hash table is said to be \defn{stable} if the position where an element is
stored is guaranteed not to change until either the element is deleted or the
table is resized~\cite{sandersstability,originalstability,KnuthVol3}.
Stability offers a number of desirable properties.  For example, stability
enables simpler concurrency-control mechanisms and thus reduces the performance
impact of locking.
Moreover, since elements are not moved, writing is
minimized, which improves PMEM performance.

The \defn{associativity} of a hash table is the number of locations where an
element is allowed to be stored.\footnote{Associativity is often associated with
caches that restrict the locations an item may be stored in.  Here we refer to
\emph{data structural associativity}, which is a restriction on how many locations a
data structure may choose from to put an item in, even on fully associative
hardware.} The best known low-associative (DRAM) hash table is the cuckoo hash
table~\cite{Pagh:CuckooHash,PaghRo01}.  In the original design, each element has
exactly two locations in the table where it is allowed to be stored, meaning
that the associativity is two.  Low associativity yields a different set of desirable
properties---most importantly, it helps search costs. For example, searching for an
element in a cuckoo hash table is fast because there are only two locations in
the table to check.  In addition, low associativity can enable us to further improve
query performance by keeping a small amount of metadata; see \Cref{sec:iceberght}.

In combination, stability can be used to achieve high insertion
throughput in PMEM, where writes are expensive, and low associativity can be use to achieve high query  performance.
Furthermore, we also show how stability enables locking and concurrency-control mechanisms to be simplified, leading to better multithreaded scaling and simpler designs for crash consistency.

Unfortunately, there is a tension between stability and low
associativity.  If a hash table has associativity $\alpha$, and
elements cannot move once they are inserted, then an unlucky choice of
$\alpha$ locations for $\alpha$ elements can block a $(\alpha+1)$st
element from being inserted.  As $\alpha$ decreases, the probability
of such an unlucky event increases.  Cuckoo hashing reduces the
probability of these bad events by giving up stability via
\defn{kickout chains}, which are chains of elements that displace each
other from one location to another. Practical
implementations~\cite{LiAn14} generally increase the number of
elements that can be stored in a given location---and thus the
associativity---to reduce the kickout-chain length and increase the
maximum-allowed \defn{load factor}, i.e, the ratio of the total number
of keys in the table to the overall capacity of the table.

Similarly, there is a three-way tension between space efficiency,
associativity, and stability. 
For example, cuckoo hash tables can be made stable if they are
overprovisioned so much that the kickout-chain length reaches 0.  Such
overprovisioning directly decreases space efficiency, but it also
increases associativity. 
Linear probing hash tables are stable (assuming they use tombstones to implement delete) but, as the
load factor approaches 1, the average probe length for queries goes
up, increasing associativity.  Other open-addressing hash tables
have a similar space/associativity trade-off.
Chaining hash tables are stable, but they have large associativity
and significant space overheads.  CLHT~\cite{david2015asynchronized} improves query performance despite
high associativity by storing multiple items in each node, but
this further reduces space efficiency.

\sysname is based on a new type of hash table, which we call \defn{iceberg hashing}.
Iceberg hash tables are the first to simultaneously achieve low associativity and stability, and they also have small space consumption.
To date, hash tables have had to choose between stability (e.g., chaining),
low associativity (e.g., cuckoo) or neither
(e.g., Robin Hood~\cite{CelisLaMu85,AmbleKn74}).
The techniques introduced in this paper also have ramifications to the
theoretical study of hash tables---we present a detailed study of
these implications, including closing various theoretical open
problems, in a companion manuscript~\cite{bender2021all}.  We
describe Iceberg hash tables in Section~\ref{sec:iceberght}.

\para{Results}
In this paper, we introduce Iceberg hashing and its implementation, \sysname.
We prove that Iceberg hashing simultaneously achieves stability and low associativity. 
Iceberg hashing is the first hash-table design to achieve both properties.
These guarantees give \sysname excellent performance on PMEM, as well
as on DRAM, and for workloads ranging from read-heavy to  write-heavy.
Specifically, \sysname accesses very few cache lines, both for queries and insertions, has low CPU cost, and has 
high load factor (and thus small space), with high probability.
Stability and low associativity enable simpler concurrency mechanisms, so that \sysname achieves nearly linear scaling with the number of threads, and it is crash safe on PMEM.






In building \sysname, based on our Iceberg hash table, we offer the following design contributions:
\begin{enumerate}
\itemsep0em 
\item We show how to achieve an efficient, practical metadata scheme
  that is adapted to practical hardware constraints.  The metadata
  scheme is small enough that the metadata for a bucket fits in a
  cache line, improving query performance and enabling nearly
  lock-free concurrency, i.e., locks are needed only for resizing.
\item A highly concurrent and thread-safe implementation of Iceberg
  hashing.  \sysname can scale almost linearly with increasing threads
  in PMEM, as well as DRAM, experiments.
\item Fenceless crash safety on PMEM.  Achieving crash-safety on PMEM
  often requires controlling the order in which cache lines get
  flushed to persistent storage, e.g., to ensure that an undo-log
  entry gets persisted before the changes to the hash table get
  persisted.  Since insertions in our hash table modify only a single
  cache line, we can achieve crash safety by simply persisting that
  cache line.
\item A simple, high-performance and concurrent technique for
  efficiently resizing Iceberg hash tables in a lazy online manner,
  thus reducing the worst-case latency of insertions.
\end{enumerate}

\para{Performance}
We evaluated \sysname on a system with Intel Optane memory.  We find that:
\begin{enumerate}[noitemsep]
  \item \textbf{Inserts and deletions:} \sysname insertions and deletions
    are roughly 50\% faster than Dash and $2-3\times$ faster than CLHT.
  \item \textbf{Queries:} \sysname positive queries are as fast as
    CLHT and negative queries are about 20\% faster.  \sysname queries
    are $1.5-2\times$ faster than Dash.
  \item \textbf{Space:}  \sysname achieves a space efficiency of 85\%, whereas
    Dash and CLHT have space efficiencies of 45\% and 33\%, respectively.
  \item \textbf{Scalability:} \sysname throughput scales nearly
    linearly in all our benchmarks.  CLHT also scales roughly linearly
    to 8 threads but scales slightly less efficiently than \sysname
    from 8 to 16 threads.  Dash hits a wall at 8 threads in several
    benchmarks.
  \item \textbf{YCSB:} \sysname is anywhere from $1\times$ to $8\times$
    faster than Dash and CLHT in our YCSB benchmarks.
\end{enumerate}

Although \sysname is designed for PMEM, we also compare it to
libcuckoo~\cite{LiAn14}, CLHT~\cite{david2015asynchronized}, and TBB~\cite{Pheatt08} in DRAM.  We find that \sysname outperforms these other
state-of-the-art DRAM hash tables on insertions and offers good but
not-quite-best query performance.  For example, \sysname is almost
twice as fast as the next fastest hash table on insertions in DRAM,
and it nearly matches the fastest hash table on positive queries, but
it is only about 75\% as fast as the fastest hash table on negative
queries and roughly 60\% as fast on deletions.  We believe this
insertion optimization at the cost of queries
reflects the fact that \sysname is optimized to minimize writes, which
are expensive in PMEM, resulting in stellar insertion performance.
Thus, \sysname is a strong choice for insertion-heavy workloads in DRAM.

\para{Roadmap}
In the rest of the paper, we discuss the various hash table designs
in and we give an overview of Iceberg
hashing and theoretical guarantees~\Cref{sec:iceberght}.
In~\Cref{sec:impl,resizing,multithreading,sec:pmem}, we present our
implementation of \sysname in DRAM and PMEM. \Cref{sec:eval} evaluates \sysname
and compares it with other hash tables.
We discuss related work in~\Cref{sec:related}.



\section{Iceberg Hashing} \label{sec:iceberght}
In this section, we begin by introducing Iceberg Hashing, a new,
stable, low-associativity hash-table design.  We then give the
theoretical basis for Iceberg hashing, proving the theorems that establish its correctness. 
%
In subsequent sections, we show how to exploit Iceberg hashing's low associativity to implement an efficient metadata scheme, explain how to make the hashtable concurrent, how to handle resizes, and how to ensure crash safety.  

\rev{R5O1}{The goal of this section is to establish the theoretical basis for the high performance we demonstrate in~\Cref{sec:eval}.  Of particular note for PMEM is that \sysname{} enables an unmanaged backyard that results in stability, which we will show is important for both high performance and crash safety on PMEM. These theoretical guarantees hold even in the presence of deletes. Previous hash-table designs have weak or no theoretical guarantees in the presence of deletes, e.g., cuckoo hashing. 
An important technical challenge is to guarantee stability and low associativity, which we simultaneously achieve in a hash table for the first time.}

\subsection{From Load Balancing to Iceberg hashing}

\rev{R4D7, R5O2}{
In this section, we have an overview of the design and design principles of
\sysname. 
\sysname{} is a three-level hash table, where most items are hashed into a very
efficient first level, some items are hashed into a less efficient second level,
and a few residual items are hashed into an overflow third level. \rev{R4D6}{The
first level is called the \defn{front yard} and the second and third levels are
called the \defn{backyard}.} In the remainder of the section, we describe how
each level is designed, and we give theorems to show that \sysname{} is correct
and fast.
Interestingly, the bounds in our main theorems are so tight that we are able to make all parameter choices in our implementation based on these theorems, as we describe below.

Consider a one-level hash table (which will correspond to the first level of \sysname{}).}
One way to design a hash table is to take an array and logically break
it into $m$ buckets of size $b$.  As items are inserted, they are
hashed to a random bucket and placed in any free spot of the bucket.
After inserting $n$ items, the expected number of items in each bucket
will be $h = n/m$ and the \defn{space efficiency} of the table will be
$bm/n = bm/hm = b/h$.  Thus, in order to optimize space efficiency, we want to minimize $b/h$.  \rev{R4D5}{But $b$ is a function of $h$, so the choices are not independent, as show in~\Cref{tab:maxfill}.  Note that in a balls-and-bins game, $b$ is the maximum fill of a bucket, because in our hash table, each bucket must be configured to be big enough to handle all insertions into that bucket.}

\begin{table}[]
\begin{tabular}{|c|c|c|}
\toprule
Ave fill $=h$          & Max fill $=b$          & Space Efficiency $= b/h$       \\
\midrule
$O(1)$  & $O(\log n / \log\log n)$ & $\Theta(\log n /  \log\log n) \gg \Theta(1)$ \\
\midrule
$\log n$     & $O(\log n)$            & $\Theta(1)$                         \\
\midrule
$\gg \log n$ & $h + O(\sqrt{h})$          & $1+o(1)$                       \\ 
\bottomrule
\end{tabular}
\caption{\boldmath The relationship between the average fill, $b$, and the
  maximum fill, $h$, in a balls-and-bins system is well
  understood~\cite{DBLP:conf/soda/BenderCCFJT19,DBLP:books/daglib/0012859}.}
  \label{tab:maxfill}
  \vspace{-2.5em}
\end{table}

The second observation is that, by using a backyard, we don't need to
get the number of overflows to 0.  Specifically, we configure the
front yard so that the number of overflows will be $O(n/\polylog{n})$.
Then we can use any hash table for the back yard as long as it has
$\Theta(1)$ space efficiency.  
In \cref{sec:theory}, we show that the overall space efficiency of
the hash table will be a remarkable $1+O(1/\log n)$.

\rev{R4D8}{We conclude that $h$ should be somewhat greater than $\log n$, so we set the bucket size to be 64.  This bucket size is bigger than a cache line but \sysname{} does not read the whole bucket.  Rather it will keep metadata to index the bucket.  \sysname{} finds itself in a sweet spot because, as will show below, buckets of size 64 are small enough that the metadata needed to index the items in a bucket fits in a cache line.

A smaller bucket size offers a poorer choice. 
Smaller buckets would either decrease the space efficiency, by increasing the number of buckets needed to prevent overflows, or increase the number of items that land in the less efficient backyard.  On the other hand, a larger bucket size does not decrease overflows but the metadata for bigger buckets no longer fits in a cache line.
}

\subsection{Bounding the Overflows}
\label{sec:theory}

In this section, we describe the theoretical basis for Iceberg hash
tables. The primary theorem we need is a bound on the number of items that will be placed in the backyard.




As before, we have  a hash table with a front yard consisting of an array broken into $m$ equal-size buckets.
Items are hashed to a single bucket and may be placed in any slot
in their bucket---if there is \emph{no} free slot in the bucket, then the item is placed in the \emph{backyard}. 
The hash table is stable: once inserted, items are not moved until they are deleted. 

The following theorem bounds the size of the backyard.



\rev{R4D4}{
\begin{theorem}\label{thm:ice}
  Consider a frontyard/backyard hash table that can hold up to $n$
  items.  Suppose further that the front yard consists of $m$ bins.
  When an item $x$ arrives, it is hashed uniformly into a bin $H(x)$.
  If bin $H(x)$ has room, the item is placed into the bin, and if bin
  $H(x)$ is full, it is placed into the backyard.  The capacity of a
  bin is determined by two parameters: $h \le n^{1/4} / \sqrt{\log n}$
  and $j \le \sqrt{h}$.  Specifically, each bin has capacity
  $h+ j\sqrt{h+1} + 1$.  Then at any moment over the course of
  $\poly{m}$ insertion/deletions where the table never has more than
  $n$ items, the number of balls in the backyard is
  $O(n / 2^{\Omega(j^2)} + n^{3/4}\sqrt{\log n})$ with probability
  $1 - 1 / \poly{n}$.
\end{theorem}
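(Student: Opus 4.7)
\textbf{Proof plan for Theorem~\ref{thm:ice}.}

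My plan is to first reduce the dynamic analysis to a static one, then apply a Chernoff-style bound per bin, use concentration across bins, and finally union bound over time. Let $c = h + j\sqrt{h+1} + 1$ denote the bin capacity, and for each bin $i$ and time $s$ let $A_i(s)$ be the number of items currently alive that hash to bin $i$, and let $B_i(t)$ be the number of alive backyard items hashed to $i$ at time $t$. The first step is the key structural reduction: I claim $B_i(t) \le \max_{s \le t} (A_i(s) - c)_+$. To see this, order the currently alive backyard items hashed to $i$ by their insertion times, $s_1 < s_2 < \cdots < s_k$ where $k = B_i(t)$. When the $\ell$-th such item was inserted at time $s_\ell$, the frontyard bin was full (it held exactly $c$ items), and the $\ell-1$ earlier backyard items were still alive, so $A_i(s_\ell) \ge c + \ell$. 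Taking $\ell = k$ gives $\max_{s \le t} A_i(s) \ge c + B_i(t)$.

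The second step is a single-time Chernoff bound. Fix a time $s$; assuming the adversary is oblivious, the alive items have independent uniform hash values, so $A_i(s)$ is a Binomial with mean at most $h$. Since the deviation $c - h = j\sqrt{h+1}+1$ lies in the sub-Gaussian regime (because $j \le \sqrt{h}$), standard Chernoff yields $\Pr[A_i(s) \ge c + r] \le 2^{-\Omega(j^2 + r^2/h)}$ for every $r \ge 0$. Summing the tail, $\mathbb{E}[(A_i(s) - c)_+] \le O(\sqrt{h})\cdot 2^{-\Omega(j^2)}$, so the expected overflow at time $s$ is at most $m \cdot O(\sqrt{h}) \cdot 2^{-\Omega(j^2)} = O(n/2^{\Omega(j^2)})$ after using $mh = n$ and folding the $\sqrt{h}$ into the exponent (using $h \le n^{1/4}/\sqrt{\log n}$).

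The third step is concentration of $S(s) := \sum_i (A_i(s)-c)_+$ around this expectation. Viewed as a function of the independent hash values of the (at most $n$) alive items, $S(s)$ satisfies a bounded-differences property: changing one item's hash alters $A_i$ in only two bins by $\pm 1$, so $S(s)$ changes by at most $2$. McDiarmid's inequality then gives $\Pr[|S(s) - \mathbb{E} S(s)| \ge t] \le 2\exp(-\Omega(t^2/n))$, so $t = O(n^{3/4}\sqrt{\log n})$ (which comfortably exceeds $\sqrt{n \log n}$) gives failure probability smaller than any fixed inverse polynomial in $n$. Hence at a single time $s$, the overflow is $O(n/2^{\Omega(j^2)} + n^{3/4}\sqrt{\log n})$ with probability $1 - 1/n^{\omega(1)}$.

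The fourth step finishes the dynamic bound. By Step 1, the total backyard size at time $t$ is at most $\sum_i \max_{s\le t}(A_i(s)-c)_+ \le \max_{s \le t} S(s) + (\text{slack for interchanging max and sum})$; more cleanly, it is at most $\max_{s \le t} S(s)$ after replacing each bin's max-over-time by the time when $S$ is largest, or by bounding directly via a union bound over the $\poly{m}$ post-operation snapshots. Either way, a union bound over $\poly{m}$ time steps of the bound from Step 3 (with the hidden polynomial in $n^{\omega(1)}$ chosen large enough) gives the desired $1 - 1/\poly{n}$ guarantee. The main obstacle is Step 3, because we need a concentration bound that is simultaneously tight enough to match $n^{3/4}\sqrt{\log n}$ and robust enough to handle the negative correlations among bins; the bounded-differences formulation sidesteps this by working directly with the independent hash-value randomness. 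A secondary subtlety is Step 1's monotone reduction, which is what lets stability not hurt us despite backyard items lingering after their frontyard bin empties.
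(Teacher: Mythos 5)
Your Steps 1--3 are individually sound (the per-bin reduction $B_i(t) \le \max_{s \le t}(A_i(s)-c)_+$ is correct, and $S(s)=\sum_i (A_i(s)-c)_+$ is indeed 2-Lipschitz in the hash values, so McDiarmid controls it at each fixed time), but the argument breaks at Step 4, and the break is not repairable by a union bound. You cannot replace $\sum_i \max_{s\le t}(A_i(s)-c)_+$ by $\max_{s\le t} S(s)$: the per-bin maxima are attained at different times, and the sum of maxima can be asymptotically larger than the theorem's bound even though the true backyard is small. Concretely, consider an adversary that repeatedly deletes all $n$ items and inserts $n$ fresh ones, for $\poly{m}$ rounds. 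Each bin then sees many essentially independent occupancy samples, so its historical maximum is about $h + \Theta(\sqrt{h\log n})$; whenever $j\sqrt{h}$ is somewhat below $\sqrt{h\log n}$ (e.g. $h=\log^2 n$, $j=\sqrt{\log n}/10$, which the hypotheses allow), a constant fraction of bins have $\max_s (A_i(s)-c)_+ = \Omega(\sqrt{h\log n})$, so your surrogate quantity is $\Omega(n\sqrt{\log n}/\sqrt{h})$ --- far above the claimed $O(n/2^{\Omega(j^2)} + n^{3/4}\sqrt{\log n})$. The true backyard stays small in this scenario precisely because the items that overflowed in old epochs have since been deleted; your Step-1 surrogate forgets this, so the quantity you end up concentrating is simply too large to prove the theorem.

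What is needed is control of the actual backyard occupancy at each fixed time $t$. The paper does this by using stability directly: an alive ball is in the backyard at time $t$ if and only if its bin was full at its own (last) insertion time, which gives a per-ball overflow probability of $2^{-\Omega(j^2)}$ and hence the right expectation; concentration is then obtained not by bounded differences but by partitioning the bins into $\sqrt{n}$ collections, conditioning on which balls hash into which collection (so the per-collection backyard counts become independent and each is bounded by $O(\sqrt{n} + n^{1/4}\sqrt{\log n})$), and applying Chernoff across collections --- this is exactly where the $n^{3/4}\sqrt{\log n}$ term arises --- followed by the union bound over the $\poly{m}$ time steps. Your McDiarmid idea could in principle be salvaged if you applied it to the backyard size at time $t$ itself rather than to $S(s)$, but then you must prove a bounded-differences (coupling) lemma for the full dynamic placement process under changing a single item's hash, with deletions in play; that lemma is the real content your plan currently sidesteps, and without it (or the paper's conditioning device) the proof does not go through.
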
}

\begin{proof}
For the sake of analysis, partition the bins into $K = \sqrt{n}$ collections $\mathcal{B}_1, \ldots, \mathcal{B}_{K}$ each of which contains $m / K$ bins. For each time $t$ and bin collection $\mathcal{B}_i$, define $R_{t, i}$ to be the set of balls $x$ that are present at time $t$ and satisfy $H(x) \in \mathcal{B}_i$. By a standard application of Chernoff bounds, we can deduce that, for any fixed $i, t$ we have 
\begin{align}
\begin{split}
|R_{i, t}| &\le n / K + O(\sqrt{(\log n) \cdot n / K})\\& \le \sqrt{n}
+ O(n^{1/4}\sqrt{\log n})
\end{split}
\label{eq:rit}
\end{align}
with high probability in $n$ (i.e., with probability $1 - 1 / \poly{n}$). Applying a union bound over all $i, t$, we find that \eqref{eq:rit} holds with high probability in $n$ for all $i, t$ simultaneously. Consider any possible outcome $R$ for the sets $\{R_{i, t}\}$, where the only requirement on $R$ is that \eqref{eq:rit} holds for all $i, t$;  we will show that if we condition on such an $R$ occurring, then the size of the backyard is $O(n / 2^{\Omega(j^2)} + n^{3/4}\sqrt{\log n})$ with high probability in $n$.

Consider some time $t$, and let $X_i$ be the number of balls that are in the backyard at time $t$ and that satisfy $H(x) \in \mathcal{B}_i$. Observe that the conditional variables $X_1 | R, X_2 | R, \ldots, X_{K} | R$ are independent (since $R$ fully determines which $x$ and $i$ satisfy $H(x) \in \mathcal{B}_i$). Thus, if we define
$$X | R := \sum_{i = 1}^{K} X_i | R,$$
then $X | R$ is a sum of independent random variables, each of which is (by \eqref{eq:rit}) deterministically in the range $[0, O(\sqrt{n})]$. We can therefore apply a Chernoff bound to $X | R$ to deduce that
$$\Pr[X | R \le \E[X | R] + O(\sqrt{Kn\log n})] \ge 1 - 1 / \poly{n},$$
Recalling that $K = \sqrt{n}$, we conclude that $X | R \le \E[X | R] + O(n^{3/4}\sqrt{\log n})$ with high probability in $n$. 

To complete the proof, it suffices to bound $\E[X | R]$ by $O(n / 2^{\Omega(j^2)})$. For this, in turn, it suffices to show that each ball $x$ present at time $t$ (there are up to $n$ such balls) satisfies
\begin{equation}
\Pr[x \text{ in backyard} \mid R] \le 1 / 2^{\Omega(j^2)}.
\label{eq:indball}
\end{equation}
To prove \eqref{eq:indball}, consider a ball $x$ that hashes to some collection $\mathcal{H}_i$. At the previous time $t_0 < t$ that $x$ was inserted, we have by \eqref{eq:rit} that there were at most $\sqrt{n} + O(n^{1/4} \sqrt{\log n})$ balls present that hashed to $\mathcal{H}_i$ (i.e., balls in the set $R_{i, t_0} \setminus \{x\}$); each of these balls has probability $K / m = K / (n / h) = Kh / n$ of hashing to the same bin as $x$, meaning that the number $Y$ of balls that hash to the same bin as $x$ at time $t_0$ satisfies
\begin{align*}\E[Y | R] &\le \frac{Kh\sqrt{n} + O(Khn^{1/4}\sqrt{\log
  n})}{n} \\&= h (1 + \sqrt{\log n} / n^{1/4}) \le h + 1.\end{align*}
The random variable $Y | R$ is just a sum of (up to) $n^{2/3} + O(n^{1/3} \sqrt{\log n})$ independent indicator random variables (one for each ball in $R_{i, t_0} \setminus \{x\}$). So by a Chernoff bound we have that
$$\Pr[Y \mid R \ge h + 1 + j \sqrt{h + 1}] \le 2^{-\Omega(j^2)}.$$ 
This implies \eqref{eq:indball}, which completes the proof. 
\end{proof}

The main consequence of this Theorem is that this simple bucketed
front-yard
design can hold all but $n/\poly{h}$ items, and by design the front yard is
also stable.  For example, if we set $h=\log n$ and $j=\Omega(\sqrt{\log\log n})$,
then $O(n/\log n)$ items will go to the backyard. \rev{R4D10}{The choice of $h=\log n$ suggests that the front-yard buckets should be of size 64, which we show in~\Cref{sec:eval} provides excellent performance.}

\subsection{The Backyard}

Iceberg hashing allows any of several backyard designs.  For \sysname,
we have selected a hash-table strategy based on the
power-of-2-choices.
We use power-of-two-choices in order to mitigating the space overhead of
the backyard.  The potential issue with using a power-of-two-choice hash table
is that queries and inserts level 2 must examine two buckets.  However, most
items reside in the front yard, so most queries need to examine only
the front yard, which means that the cost of checking two buckets in
level 2 will not substantially impact overall performance.

To analyze the space efficiency and overflow probability of the backyard,
let $z$ be the upper bound on the number of overflowing
items from Theorem~\ref{thm:ice}.
The backyard consists of an array of length
$\Theta(z\log\log{z})$, divided into $z$ buckets of size $\Theta(\log\log{z})$.
Items are hashed to two buckets and are placed into a slot in the bucket with
fewer items.

The following result of V\"{o}cking provides a theoretical guarantee
that the backyard will not overflow.

\begin{theorem}[\cite{vocking2003asymmetry}]
   Consider an infinite balls-and-bins process with $z$ bins in which at each
   step a ball is either inserted using the power-of-2-choices algorithm or an
   existing ball is removed, such that there are at most $hz$ balls present at
   any given step. Then the maximum load of any given bin is $(\ln\ln{z})/\ln{2}
   + O(h)$.
   \label{thm:vocking}
\end{theorem}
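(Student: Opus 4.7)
The plan is to adapt V\"ocking's witness-tree / layered-induction machinery to the dynamic setting with deletions. Let $\beta_i(t)$ denote the number of bins holding at least $i$ balls at time $t$, and let $\beta_i^{*} = \max_{s \le t} \beta_i(s)$ be its running maximum over a polynomial horizon $T = \poly{z}$. The goal is to show that $\beta_i^{*}/z$ decays doubly exponentially in $i$ once $i$ exceeds a threshold of order $h$, so that $\beta_{i}^{*} < 1$ for $i = (\ln \ln z)/\ln 2 + O(h)$ with high probability.

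First, I would dispose of deletions by a coupling / monotonicity argument. Since only insertions raise loads, at any fixed time $t$ the load profile is stochastically dominated by that of an auxiliary insertion-only process that replays the same insertions with the same random choices but skips deletions; restricting attention to the most recent $hz$ insertions (the only ones that can still be ``alive'' at $t$) gives a valid comparison, after which a union bound over $t \le T$ reduces the problem to an insertion-only process with at most $hz$ balls. The second ingredient is the layered induction. For a bin to host its $(i+1)$-st ball, the responsible insertion must have sampled two bins each of height $\ge i$ at that instant, an event of conditional probability at most $(\beta_i(s)/z)^2 \le (\beta_i^{*}/z)^2$. Summing over the $\poly{z}$ insertion times and applying a Chernoff bound, combined with a union bound over dyadic guesses for $\beta_i^{*}$, yields
\begin{equation}
\beta_{i+1}^{*} \;\le\; \max\!\Bigl(\tfrac{2(\beta_i^{*})^2}{z},\; C \log z\Bigr)
\label{eq:li-recursion}
\end{equation}
with probability $1 - 1/\poly{z}$, for an absolute constant $C$. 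Starting from the base case $\beta_{c h}^{*} \le z / e$ for a sufficiently large constant $c$, which follows from a direct Chernoff bound using that the expected load of any bin is at most $h$, and iterating \eqref{eq:li-recursion} gives $\beta_i^{*} < 1$ after $(\ln \ln z)/\ln 2 + O(1)$ additional levels; adding the $O(h)$ levels spent reaching the base case yields the claimed bound.

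The main obstacle is making the recursion \eqref{eq:li-recursion} rigorous in a regime where deletions can correlate the ``heights seen'' by different insertions: the naive witness-tree argument treats the per-insertion events as independent given the global height profile, but deletions allow a single ball's history to depend on the surrounding profile in more subtle ways. The coupling in the first paragraph removes this obstruction at the cost of only the final $hz$ insertions being analyzed, but one must still verify that the union bound over the $\poly{z}$ insertion times is not overwhelmed -- this is exactly why the recursion is on the running maximum $\beta_i^{*}$ rather than on $\beta_i(t)$, so that a single high-probability bound at each level $i$ suffices. Once these bookkeeping issues are handled, the double exponential decay in \eqref{eq:li-recursion} drives $\beta_i^{*}$ below $1$ within the claimed $(\ln \ln z)/\ln 2 + O(h)$ levels.
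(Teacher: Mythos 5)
The paper does not prove this statement; it is quoted from V\"ocking's paper, whose own argument (and the one the paper later extends in its resizing lemma) is a \emph{witness-tree} argument designed precisely for the dynamic setting with deletions. Your route---reduce to an insertion-only process by monotonicity, then run layered induction---has a genuine gap at its very first step. The two-choice greedy process is \emph{not} monotone under deletions: if you replay the same insertions with the same random bin pairs but skip the deletions, the relative loads of the two candidate bins at later insertion times change, so later balls get redirected, and the auxiliary profile does not pointwise (or stochastically) dominate the real one. This non-monotonicity is exactly the known obstruction that forces witness trees (V\"ocking, Cole et al.) instead of layered induction in the deletion model, so the sentence ``the load profile is stochastically dominated by an auxiliary insertion-only process that replays the same insertions \ldots but skips deletions'' cannot be taken as a lemma; it is false as stated. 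The companion claim that only ``the most recent $hz$ insertions'' matter is also incorrect: the constraint is that at most $hz$ balls are \emph{alive} at any time, but an alive ball can be arbitrarily old, and the height at which it sits was determined by a configuration containing balls that have long since been deleted.

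Even granting some reduction, the derivation of the recursion $\beta_{i+1}^{*} \le \max\bigl(2(\beta_i^{*})^2/z,\, C\log z\bigr)$ does not go through as described. Summing the per-insertion probability $(\beta_i^{*}/z)^2$ over all $\poly{z}$ insertion times and applying Chernoff bounds the \emph{total number of height-$(i+1)$ creation events over the whole horizon} by roughly $\poly{z}\cdot(\beta_i^{*}/z)^2$, which is far larger than $(\beta_i^{*})^2/z$ unless the number of relevant insertions is $O(z)$---and it is precisely the (invalid) coupling/window step that was supposed to supply that restriction. Since deletions can also destroy tall bins, the running maximum $\beta_{i+1}^{*}$ is not controlled by the current profile alone, and without an argument that credits deletions against creations (or a backward-in-time, per-ball argument as in the witness-tree approach, which bounds the probability that any fixed ball is deep by tracing the tree of balls that must have collided with it regardless of intervening deletions) the doubly exponential decay is not established. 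The base case is fine---deterministically at most $z/c$ bins hold $\ge ch$ balls when only $hz$ balls are present---but the two issues above are where the proof would need to be rebuilt, most naturally along the witness-tree lines of the cited work.
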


For level~2, the average bucket fill $h$ is less than 1, so
Theorem~\ref{thm:vocking} tells us that the number of items that overflow at
level 2 is quite small.  We store these items in a third level that
uses a standard chaining hash table.  So few items make it to the
third level that performance and space efficiency are negligible.
\rev{R4D10}{As noted above, \Cref{thm:vocking} suggests that level 2 buckets
should be of size $\ln\ln n$.  We use 8 as a coarse upper bound on log log n for
all practical purposes.}

\subsection{Summary}

\begin{figure}[t]
  \centering
  \resizebox{\columnwidth}{!}{%
  \includegraphics{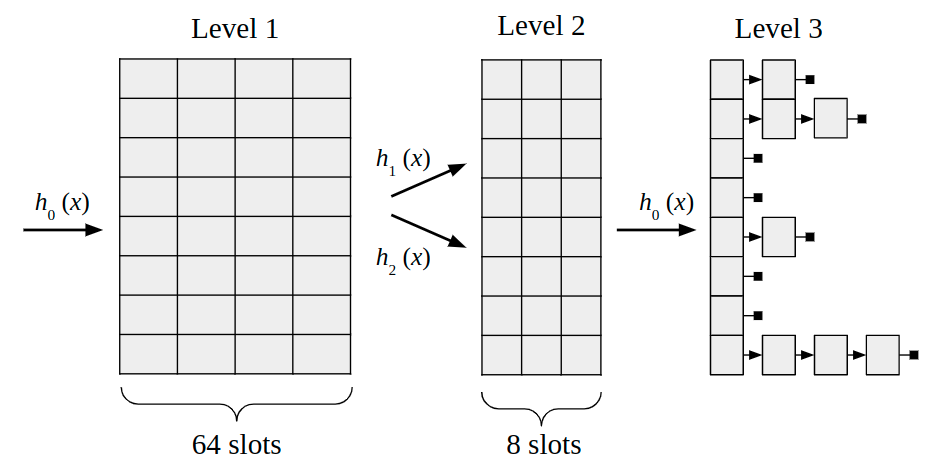}
}
  \caption{\boldmath Iceberg hash table block structure. Iceberg table has three levels.
  To insert a key value pair, we first hash the key $h_0(\mbox{key})$ and
  determine a block in level 1.
  If the block in level 1 is full, we try to insert it in level 2. In level 2,
  we hash the key twice $h_1(\mbox{key})$ and $h_2(\mbox{key})$ and insert the
  key in the emptier block.
  If the both blocks are full in level 2 then we insert the key value pair in
  level 3 block $h_0(\mbox{key})$.
  Level 3 contains a tiny fraction of keys (see~\Cref{tab:key-dist}) and choice
  of structure in level 3 does not have an impact on the hash table performance.
  \rev{R4D18}{}
  }
  \label{fig:iceberg_table}
  \vspace{-5pt}
\end{figure}

In summary, an Iceberg hash table consists of three levels, as shown in \Cref{fig:iceberg_table}.
Level 1 is a
power-of-one-choice front yard with buckets of size $\log n +
O(\sqrt{\log n\log\log n})$, level 2 is a power-of-two-choice table with
buckets of size $O(\log\log n)$, and level 3 consists
of a simple chaining hash table.

This design offers several benefits:
\begin{itemize}
  \item Such a table is stable: items never move after they are inserted.
  \item The number of buckets an item can reside in is only 4 (1
    bucket in level 1, 2 in level 2, and 1 in level 3).
  \item Most queries are satisfied by searching in level 1, so the
    average number of buckets accesses per query is just over 1.
  \item The buckets are small, so the associativity of the scheme is
    $\log n + \log \log n$ (plus level 3, which is rarely used).  So
    we can encode the exact slot of an element  using $O(\log\log
    n)$ bits. 
\end{itemize}

\rev{R4D3}{					
We conclude by noting that IcebergHT offers particular advantages on PMEM.  Specifically, it is stable and has low associativity and is backed by strong theoretical guarantees. This results in low read and write amplification, which are desired characteristics to achieve high performance on PMEM.  Furthermore, crash safety correctness follows almost directly from stability (see~\Cref{sec:pmem}).  The theoretical proofs and algorithmic novelty set up the PMEM-friendly design in the next section and are backed up by strong performance results in~\Cref{sec:eval}.}

\section{Implementation} \label{sec:impl}

We now describe how we implement metadata scheme and operations in \sysname.

\subsection{Metadata scheme} \label{sec:metadata}
This section describes our \rev{R4D9}{in-DRAM} metadata scheme that enables most queries
and inserts to complete by accessing only a single PMEM cache line.
Our goal is ambitious:  metadata is designed so that (1) metadata for each bucket fits on
a single cache line and (2) we can use vector instructions for all metadata operations.
\rev{R4D9}{Since metadata lives in DRAM, it costs substantially less to access
than PMEM.  In the event of a crash, we can recompute the metadata during
recovery, as explained in \Cref{sec:pmem}.}


One of the impediments to storing key-value pairs in large buckets as
in \sysname is that large buckets span multiple cache lines. This
hurts the cache efficiency, because operations may need to access
multiple cache lines per block.

\sysname addresses this concern by storing metadata for each block.
\rev{R2O2, R2O3, R4D10}{The metadata for a block of $k$ slots consists of an array of $k$
8-bit fingerprints, one per slot. If the slot holds a valid key, the
corresponding fingerprint is a hash of the key, otherwise the metadata
entry holds a special \texttt{EMPTY} fingerprint.  Note that we do not
reserve an entire bit to indicate empty/non-empty---we reserve a
single fingerprint value---so there are 255 valid fingerprints.}

The metadata scheme thus has a space overhead of 6.25\% for a 16 byte key-value
pair.  \rev{R4D10}{For smaller key-value pairs, the space overhead of the
metadata may be higher (e.g. 25\% for 4-byte key-value pairs) but, as we will
see in the evaluation section, many other
PMEM hash tables have much higher space overeheads.}
Importantly, because the blocks in level 1 have 64 slots and the blocks
in level 2 have 8 slots, the metadata for each block fits in a single cache
line.

During an insert operation, probing the metadata corresponding to a block
indicates which slots are empty in the block. The insert can then try to insert
the new key into one of those empty slots.

During a query operation, the fingerprint of the queried key can be checked
against the fingerprints in the metadata, yielding only those slots with a
matching fingerprint. This filters out empty slots as well as nearly all slots
with non-matching keys.

The metadata is also used to quickly compute the load in each block by counting
the number of occupied slots in the metadata block.

All of these operations can be implemented using vector instructions.
For example, to search for a fingerprint $x$ in a metadata vector $v$,
we use vector broadcast to construct a new vector $q$ where each entry
equals $x$ and then perform a vector comparison of $v$ and $q$.  To
find an empty slot, we do the same, except we set $x$ to
\texttt{EMPTY}.  To count the occupancy of a bucket, we perform the
search algorithm for \texttt{EMPTY}, which yields a bit-vector of
matching entries, and then use popcount to get the number of empty
slots.

\rev{R4D10}{Note that we do not use $2$ bits for EMPTY and RESERVED. Rather, these are two values out of $2^8$ $(256)$ values. Therefore, the fingerprints support $254$ values and the chance of collision is $64/254$.
}

\subsection{Operations}

\begin{algorithm}[t]
\centering
\caption{Insert (k, v)}
\label{alg:insert}
\begin{algorithmic}[1]
  \State $idx \gets h_0(k)$ \Comment{Compute the block index in level 1}
  \State $fp \gets \mathscr{F}(k)$  \Comment{Compute the fingerprint for key}
  \State \textsc{Lock}(lv1\_metadata[$idx$])
  \If {\textsc{ReplaceExisting}($k$, $v$)}
    \State \textsc{Unlock}(lv1\_metadata[$idx$])
    \State \Return \textsc{False}
  \EndIf
  \State $mask \gets \Call{Metadata\_Mask}{\mbox{lv1\_metadata}[idx], \texttt{EMPTY}}$
  \Comment{$mask$ is a bit-vector identifies empty slots in the block}
  \State $count \gets \Call{popcount}{mask}$  \Comment{Compute the number of empty
  slots}
  \If {$0 < \Call{popcount}{mask}$}
    \State $i \gets 0$
    \State $slot \gets \Call{Select}{mask, 0}$ \Comment{Compute the index of the first empty slot}
    \State $\mbox{lv1\_block}[idx][slot] \gets (k, v)$ \Comment{Store $(k, v)$ using 128-bit atomic store}
    \State $\mbox{lv1\_metadata}[idx][slot] \gets fp$
  \Else
    \State \Call{insert\_lv2}{$k,v,idx$} \Comment{Level 1 block is full. Try level 2}
  \EndIf
  \State \textsc{Unlock}(lv1\_metadata[$idx$])
  \State \Return \textsc{True}
\end{algorithmic}
\end{algorithm}

\begin{algorithm}[t]
\centering
\caption{Insert level2 (k, v)}
\label{alg:insertlv2}
\begin{algorithmic}[1]
  \Procedure{insert\_lv2}{$k, v, idx$}
    \State $idx1 \gets h_1(k)$ \Comment{Compute primary and secondary block
    indexes in level 2}
    \State $idx2 \gets h_2(k)$
    \State $fp1 \gets \mathscr{F}_1(k)$  \Comment{Compute primary and secondary
    fingerprints for the key}
    \State $fp2 \gets \mathscr{F}_2(k)$

    \State
    \State $mask1 \gets \Call{Metadata\_Mask}{\mbox{lv2\_metadata}[idx1], \texttt{EMPTY}}$
    \Comment{Compute a vector identifying empty slots in primary and
    secondary blocks}
    \State $mask2 \gets \Call{Metadata\_Mask}{\mbox{lv2\_metadata}[idx2], \texttt{EMPTY}}$
    \State $count1 \gets \Call{popcount}{mask1}$  \Comment{Compute the number of empty
    slots in primary and secondary blocks}
    \State $count2 \gets \Call{popcount}{mask2}$

    \State
    \If {$count2 < count1$}
    \State $idx1 \gets idx2$
    \State $fp1 \gets fp2$
    \State $mask1 \gets mask2$
    \State $count1 \gets count2$
    \EndIf

    \State $i \gets 0$
    \While{$i < count1$}
    \State $slot \gets \Call{Select}{mask1, i}$ \Comment{Compute the index of the
    next empty slot}
    \If {\Call{atomic\_cas}{\mbox{lv2\_metadata}[idx1][slot], \texttt{EMPTY}, $fp1$}}
    \Comment{Atomically set the metadata slot before updating the table}
    \State $\mbox{lv2\_block}[idx1][slot] \gets (k, v)$ \Comment{Store $(k, v)$ using 128-bit atomic store}
    \State \Return
    \EndIf
    \State $i \gets i+1$
    \EndWhile
    \State \Call{insert\_lv3}{$k,v,idx$} \Comment{Level 2 block is full.
    Try level 3}
  \EndProcedure
\end{algorithmic}
\end{algorithm}


Here we explain how to perform single-threaded operations in \sysname. Later
in~\Cref{multithreading}, we explain how to make these operations thread-safe.


\para{Inserts}
\rev{R2O1}{Our algorithm first searches whether $k$ already exists and, if so,
updates its associated value.}  For space, we omit the code for
replacing an existing item and show only the code for inserting a new
item.

We first try to insert the key-value pair in level 1. We hash the key using
$h_0$ to determine a block in level 1. If there is an empty slot in the block
then we insert the key-value pair and store the fingerprint in the
corresponding slot in the level 1 metadata. See the pseudocode
in~\Cref{alg:insert}.

If the block in level~1 is full, then we try to insert the key in level~2.  In
level~2, we use power-of-two-choice hashing to determine the block. We hash the
key twice and pick the emptier block. Similar to level~1, if there is an empty
slot in one of the blocks then we insert the key-value pair and store the
fingerprint in the corresponding slot in the level~2 metadata. See the
pseudocode in~\Cref{alg:insertlv2}.

Finally, if both the blocks in level~2 are full, then we insert the key in
level~3. We use the hash function from level~1 ($h_0$) to determine the linked
list to insert the key-value pair and insert at the head of the linked list. 
  
\para{Queries}
Similar to the insert operations, we perform queries starting from
level 1 and moving to levels 2 and 3 if we do not find the key in the
previous level.

During a query, we determine the block in a level in the same way as
we do during the insert. In level 1 and 3, there is only one block to
check and we use use hash function $h_0$ to determine the block.  In level 2,
the key can be present in either of the primary or the secondary block.
Therefore, we also perform a check in the secondary block if the key is not
found in the primary block.
 
Once we determine the block, we then perform a quick check to see if the
fingerprint of the queried key is present in the metadata of the block. Checking
the fingerprint requires a single memory access as all the fingerprints in a
given block fit inside a cache line.
If the fingerprint is not found in the metadata of the block then we can
terminate the query at that level and move to the next level. Otherwise, if one
or more fingerprint matches are found in the metadata of the block we then
perform a complete key match in the table for all possible matches and return a
pointer to the value if a key match is found.

If we are in level 3 during a query, we perform a linear search through the
linked list to find the key. However, buckets in level 3 are almost always empty
(<{}<1\% please refer to~\Cref{tab:key-dist}) and therefore we rarely have to
perform the linear search through the linked list.

\para{Deletions}
Deletions are performed similarly to queries. We first look
for the key starting from level 1 and then proceed to levels 2 and 3 if the key is not yet
found. Once the key is found, we first reset the
corresponding fingerprint in the metadata and then reset the key-value pair slot
in the table.

The pseudo-code for the query and remove operations follow the similar approach
as the insert operation pseudo-code. Therefore, they are omitted from the paper
to avoid redundancy.



\section{Resizing} \label{resizing}

This section describes how we resize the \sysname hash table when it reaches
full capacity.

The three levels of the \sysname hash table (see \Cref{sec:iceberght})
can be resized independently of each other.  We invoke a resize when
the load factor of the hash table reaches a predefined threshold,
which in \sysname has the default of 85\%.



In \sysname, we perform an in-place resize. In the in-place resize, we do not
allocate a separate table of twice the current size and move existing keys over
to the new table. Instead we use \emph{mremap}\footnote{mremap() expands (or
shrinks) an existing memory mapping~\cite{mremap}).} to remap the existing table space to twice
the size. To resize a given level, we first remap the level to twice the number
of current blocks. The size of each block remains the same (64 slots in level 1
and 8 slots in level 2) across resizes.
This means that during a resize, the space overhead of the table will be a most
$2\times$ instead of $3\times$ if we allocate a separate table of twice the size. 

Doing in-place resize means that only about half the existing keys (rather than
all) need to be moved to a new location because each item $x$'s bucket is
computed as $h(x) \bmod m$, where $m$ is the number of buckets in the table.
We move each key-value pair by first inserting
it into its new block (in the same level) and then deleting it from
its old block.

\rev{R5O5}{
The shrink can be performed in the similar way as the doubling. The keys from
the second half of the table can be moved to the first half by rehashing the keys.
Once the move is complete, the second half of the table can be freed.}

\subsection{Guaranteeing Balanced Levels After Resizing}

In this subsection, we argue that, as the table is dynamically resized, the
bounds from Section \ref{sec:theory} 
on the number of elements that overflow from levels 1 and 2 continue to hold.
The bound on the number of overflow elements from level 1 follows from
essentially
the same argument as in Theorem \ref{thm:ice}, so we will focus here on showing
that the bins in level 2 remain balanced.

Whenever the size of level 2 doubles, from $m$ bins to $2m$ bins, each bin $i$
can be thought of as splitting into two bins $i$ and $m + i$; each of the
elements that were in bin $i$ move to bin $m + i$ with probability $50\%$
(depending on the element's hash). We are not aware of any past bounds for the
maximum fill of a bin when bins are split in two from time to time.  Here, we
provide a lemma showing that the nice load-balancing property of
power-of-2-choice bin selection (i.e., Theorem \ref{thm:vocking}) is maintained, even when using our
resizing scheme. The proof can be viewed as an extension of the witness-tree
techniques used in \cite{vocking2003asymmetry}.

\begin{lemma}
Start with $M_0$ empty bins, and perform $N \le \poly{M_0}$ ball insertions. Double the bins whenever the current number $n$ of balls in the system surpasses $m / 4$, where $m$ is the current number of bins. At any given moment, the number of balls in the fullest bin is guaranteed to be $O(\log \log N)$ with probability $1 - 1 / \poly{N}$.
\end{lemma}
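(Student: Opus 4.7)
The plan is to extend V\"ocking's witness-tree argument \cite{vocking2003asymmetry} to the dynamically-resizing setting. The key structural observation is that a doubling never increases the load of any bin: when a bin of load $\ell$ splits---based on one additional hash bit per contained ball---into two child bins, each child receives at most $\ell$ balls. Hence all load in the system is created by insertions and can only decrease at doublings. A second useful observation is that the load factor never exceeds $1/4$, so the ``target'' load at which the witness tree naturally terminates is $O(1)$.

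Fix the time $T$ and bin $i^*$ at which we want to bound the load, let $m$ be the current number of bins, and for each earlier time $t$ with table size $m(t)$ define the \emph{ancestor} of $i^*$ at time $t$ to be the bin $i^* \bmod m(t)$. Suppose $i^*$ has load $L$ at time $T$, and order the $L$ balls currently in $i^*$ by insertion time $t_1 < t_2 < \dots < t_L$. By the splitting rule, just before time $t_j$ the ancestor of $i^*$ already contained the $j-1$ earlier balls, so its load was $\ge j-1$. The witness tree then unfolds as in the static case: ball $b_L$ was placed via power-of-two-choices in the table of size $m_L$ current at $t_L$; one of its two choices was the ancestor (load $\ge L-1$), so the other, a uniformly random bin in the $m_L$-table, also had load $\ge L-1$. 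Recursing on the ``other choice''---locating the last ball inserted into it, identifying its two choices in the table size current at \emph{that} insertion, and so on---yields a binary tree of depth $\Theta(L)$ whose nodes are labelled by (epoch, bin) pairs and whose leaves correspond to ancestors with load $O(1)$.

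The probabilistic step is then standard: each of the $2^{\Omega(L)}$ nodes is determined by an independent, uniformly random ``other choice'' in its epoch's table, so any fixed labelled tree is realized with probability at most $\prod_{\text{nodes}} O(1/m_{\text{node}}) \le 1/M_0^{\Omega(2^L)}$. A union bound over the $O(mN)$ choices of starting (bin, time) pair and over tree topologies shows that $L = \Theta(\log \log N)$ suffices to push the total failure probability below $1/\poly{N}$.

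The main obstacle is verifying that the witness-tree recursion composes correctly across epochs: one must check that the ``other choice'' of an insertion in a smaller table is still a uniformly random, independent bin, and that the ancestor-load lower bound $L-r$ propagates correctly when parent and child nodes live in tables of different sizes. The essential ingredient---which makes the argument go through---is that the hash of each ball pins it to a fixed sequence of (epoch, bin) cells drawn independently of every other ball; once this bookkeeping is set up, V\"ocking's realization probability bound transfers with only cosmetic changes.
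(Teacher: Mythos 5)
Your overall plan---epoch-indexed hash functions, the observation that splits never increase bin loads, the ancestor-bin bookkeeping, the load-factor invariant from the doubling rule, and a V\"ocking-style witness tree of depth $\Theta(\log\log N)$---matches the paper's argument. But the probabilistic accounting you give for the tree would not close, and the step you gloss over is exactly the crux of adapting the witness-tree method to resizing. You bound the probability of a \emph{fixed} (epoch, bin)-labelled tree by $\prod_{\text{nodes}} O(1/m_{\text{node}})$ and then union only over the root (bin, time) pair and over tree topologies. The bin labels, however, are either (i) determined by the process (they are whatever the ``other choices'' happen to be), in which case the per-node $1/m_{\text{node}}$ factor is not the probability of any event you need, or (ii) part of the union bound, in which case each node contributes roughly $m_{\text{node}}\cdot O(\log N)$ label choices times probability $O(1/m_{\text{node}})$, i.e., a factor $\Theta(\log N)$ per node, and the product over the $\polylog{N}$ nodes blows up. In addition, ``the last ball inserted into the bin'' is a process-dependent identity, so you cannot treat its other choice as a fresh uniform bin without justification. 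The standard fix, and the one the paper uses, is to count over which \emph{balls} occupy the tree nodes: for each node $u$, the expected number of ways to choose its two child balls among the $n_u$ balls present at $u$'s insertion and have their hashes (at table size $m_u$) intersect $u$'s choices is about $\binom{n_u}{2}\cdot \frac{4}{m_u^2} = \left(\frac{2n_u}{m_u}\right)^2 \le \frac14$, using the invariant $n_u \le m_u/4$ enforced by the doubling rule. This per-node constant smaller than $1$ is what makes the expected number of witness configurations with $\polylog{N}$ nodes at most $4^{-\polylog{N}} \le 1/\poly{N}$. In your write-up the load-factor invariant is invoked only to terminate the tree at $O(1)$ load and never enters the counting, so the key quantitative ingredient is missing.

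A secondary, more presentational issue: as literally described (``recurse on the other choice of the last ball''), your construction produces a witness \emph{path}, which would only yield an $O(\log N)$ bound; the doubly logarithmic bound needs $2^{\Omega(L)}$ nodes, i.e., each witness ball must spawn two children (the balls one height lower in \emph{both} of its choice bins, evaluated at its insertion-time table size), as in the paper. You clearly intend the branching version, but the recursion should be stated so that every internal node has two children.
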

\begin{proof}[Proof Sketch]
For each ball $u$, define $n_u$ (resp. $m_u$) to be the number of balls (resp. bins) that were present when $u$ was inserted. As an invariant, we always have $n_u \le m_u / 4$. 

If a given ball $x$ has height $\Theta(\log \log N)$ then we can construct a depth-$\Theta(\log \log N)$ \defn{witness tree} $T$ of balls, where $x$ is the root, and where the children, $v_1$ and $v_2$, of any given node $u$ are determined as follows: if $u$ was placed at height $\ell$ when it was inserted, then $v_1, v_2$ are the balls that were at height $\ell - 1$ in bins $h_1(u, m_u)$ and $h_2(u, m_u)$.

We claim that, for any given ball $u$, if $u$ were to be a node in $T$, then the expected number of ways that we could hope to assign children to $u$ is at most $1/4$. Indeed, there are $\binom{n_u}{2} \approx n_u^2 / 2$ ways to choose two nodes $v_1, v_2$ that were present when $u$ was inserted, and the probability that both $v \in \{v_1, v_2\}$ satisfy $\{h_1(v, m_u), h_2(v, m_u)\} \cap \{h_1(u, m_u), h_2(u, m_u)\} \neq \emptyset$ is at most $\frac{4}{m_u^2}$. So the expected number of ways that we can assign children to $u$ is at most
$$\frac{n_u^2}{2} \cdot \frac{4}{m_u^2} = \left(\frac{2n_u}{m_u}\right)^2 \le \left(\frac{1}{2}\right)^2 = \frac{1}{4}.$$

Assume for simplicity that all $\polylog{n}$ of $T$'s nodes are distinct balls.\footnote{Formally, we can reduce to this case via standard pruning arguments, as in, e.g., \cite{vocking2003asymmetry}.} We have shown that, for each ball $u$, the expected number of ways that we can assign children to $u$ is $1/4$. Using this, one can argue that the expected number of valid configurations for the full tree $T$ with $\polylog{N}$ parent/child relationships is at most $1/ 4^{\polylog{N}} \le 1 / \poly{N}$. The probability of such a $T$ existing is therefore at most $1 / \poly{N}$.
\end{proof}


\section{Multi-threading}\label{multithreading}

We now describe how we implement thread-safe operations in \sysname. We
first describe how to synchronize among threads performing insert, query, and
delete operations. Afterwards, we explain how to synchronize among threads
when a level resizes.

\subsection{Thread-safety across operations} \label{thread-operation}

\rev{R4D14}{
We use one bit in the level 1 metadata as a lock.
For level 1, the metadata
consists of an array of 64 8-bit fingerprints.  We steal one bit from
one of the fingerprints to serve as the lock bit.  Consequently, that
fingerprint slot is only 7 bits and has a slightly higher
false-positive rate.

When a thread wants to insert a key that hashes to block $i$ in level
1, it first sets the lock bit for block $i$ using an atomic
fetch-and-or loop.  It holds this lock for the entire duration of the
insert, i.e. even if the element ends up inserted in level 2 or 3.
This ensures that inserts/updates/deletes of the same key cannot
execute concurrently, since they will both attempt to acquire the same
lock.

After acquiring the lock, the thread checks whether the key already
exists in any level and updates or deletes it, depending on the
requested operation.

When inserting a key that does not already exist in the hash table, we
first check for an empty slot in level 1 by using the metadata.  If we
find one, then we use a 128-bit atomic write to store the key and
value in the slot and update the fingerprint in the metadata.  Since
we hold a lock on the level 1 block, no additional synchronization is
necessary.

If the insertion goes to level 2 or 3, then we need to carefully
update the bucket and metadata because the locks on level 1 do not preclude
other threads operating
on the same level 2 or 3 bucket (but not the same key).  In level 2,
we find a metadata slot holding \texttt{EMPTY}, CAS our fingerprint into the
metadata slot, claiming it for our operation, and then write the
key-value pair into the slot using a 128-bit atomic write.
In level 3, we use an array of 1-byte integers to lock the linked list in which
we want to insert the key. We acquire a lock on the linked list using an atomic
\emph{test-and-set} instruction.

To support concurrent deletes and queries, we reserve a special
``invalid'' key.  Deletes reset the slot to the invalid key and then
set the corresponding fingerprint to \texttt{EMPTY}.
\rev{R4D15}{Note that we
can still allow the application to insert a key that is equal to our
special ``invalid'' key.  We just need to set aside a special location
for storing the associated value and a bit indicating whether the key
is present or not.  Concurrent updates can be made safe by using
\texttt{cmpxchg16b} to update the associated value and
the ``present'' bit atomically.
Queries must also special-case this key to check
the designated location instead of performing the standard lookup
algorithm.  They must also use 128-bit loads to get
the ``present'' bit and the value in one atomic read.}

Queries are lockless on levels 1 and 2.  They proceed through the levels, examining any
slots with a matching fingerprint.  They load the key-value pair from
a candidate slot using 128-bit atomic reads and then check whether the
key read from the slot is valid and actually matches the queried keys.
On level 3 they check for bucket emptiness locklessly but acquire locks
on buckets before searching in them.
Since all slots are read and written using 128-bit atomic operations,
and since buckets on level 3 are locked,
queries are guaranteed to see only entries with either invalid keys (which are ignored) or
with correct key-value pairs.
}

\subsection{Multi-threaded performance analysis}

Each insert, delete, or query dirties exactly one PMEM cache line,
i.e.  for the slot affected by the operation.  As for the metadata,
each mutation also dirties the level 1 metadata cache line (in DRAM)
for the target key's block (to acquire the lock).  If the insert does
not go into level 1, then it will also access 2 metadata cache lines
for level 2, and will dirty one of them.  Level 3 is so rarely used
that we can largely ignore it.  As our evaluation shows, over 90\% of
the keys go in level 1, so the average number of DRAM cache lines
accessed is around 1.2, and the average number dirtied is around 1.1.

Furthermore, since the cache line accesses are determined by the hash
of the key, they are independent (unless there are some hot keys that
get frequently updated) and therefore it is unlikely that two threads
will attempt to access/dirty the same cache lines at the same time.
\rev{R2O1}{Hot keys that are frequently updated are a genuine scaling
  bottleneck for almost all hash tables, including \sysname.}

Queries are invisible, i.e. they are lock free and dirty no cache
lines.




\subsection{Thread-safety across resizes}

\para{Initiating resizes}
When a resize is invoked, the table structure goes through the
memory-doubling phase, which requires a global lock on the hash table. During
the doubling phase, the insert, query, and delete operations cannot
operate on the table.  Thus, the table has a global reader-writer lock
for synchronizing between the memory-doubling step and all other
operations.  All other operations grab the global lock in read-mode, a
thread performing the memory-doubling step grabs it in write mode.

\rev{R4D19}{ The global lock is implemented as a distributed readers-writer
lock~\cite{DBLP:conf/spaa/LevLO09} \alex{suggest kill:} so that threads
acquiring the lock in read mode do not thrash on the cache line containing the
count of the number of readers holding the lock.} 


Each insertion checks the current load factor of the hash table and
performs a memory-doubling step if the load factor is above a
configurable threshold.  In order to ensure high concurrency,
insertion threads first check the load factor while holding the global
lock in read mode.  If a thread detects that a resize is needed, it
releases the global lock in read mode, reaquires it in write mode,
and rechecks the load factor.  If it is still above threshold, then
it performs the memory-doubling step, releases the global lock, and then
performs an insertion, as described below.

\alex{Should this be blue text?}
Recall that we ensure there is at most one operation per key by
locking the level 1 block for a key being inserted, updated, or
deleted.  A memory-doubling step changes the mapping from keys to
level 1 blocks, and hence changes the lock for each key.  We need to
ensure that there are not two threads operating concurrently but using
different key-to-lock mappings.  The global resizing lock solves this
problem by waiting for all in-flight mutations to complete before
beginning the resize.  Thus, during the resize, there are no threads
holding any locks on level 1 blocks.  After the resize completes,
mutations can resume, using the new key-to-lock mapping.

\rev{R4D11, R4D12, R4D13}{
\para{Concurrency of block moves and other operations} After the
memory-doubling step, existing key-value pairs must be moved to their
new location in the table.

We refer to blocks in the first half of the table as \defn{old blocks}
and blocks in the second half of the table as \defn{new blocks}.  Each
new block has a corresponding old block.

One clearly safe way to perform this step is to freeze the world,
perform all the moves, and then let other operations proceed.  Rather
than freezing the world, we simulate this by moving blocks the first
time any insert, update, or delete operation attempts to access them.
Concretely, during a resize, we maintain an additional \defn{moved}
flag for each old block.  The flag can be in one of three states:
\texttt{UNMOVED}, \texttt{IN-FLIGHT}, or \texttt{MOVED}.  Initially
all old blocks are marked as \texttt{UNMOVED}.  Whenever an insert,
update, or delete is about to access a block, it first checks the
state of the corresponding old block.  If the old block is in the
\texttt{UNMOVED} state, then the thread attempts to CAS the block's
state to \texttt{IN-FLIGHT}.  If the CAS fails, then the thread waits
until the state is \texttt{MOVED}.  If the CAS succeeds, then the
thread iterates over the block, moving key-value pairs to their new
block.  The thread then sets the block's state to \texttt{MOVED}.  The
operation can then continue its execution.

Queries do not check the moved flags, so we need to ensure that
queries and concurrent moves will not result in incorrect answers.
Queries check both the old and new locations for a key, in that order.
Moves ensure that each key-value pair is written to its new location
before erasing it from its old location.  Thus queries will never miss
an item in the table.

As an optimization, we also maintain a counter of the number of blocks
that still need to be moved.  Threads check this counter after
acquiring the global resize lock in read mode.  If the counter is 0,
then threads can skip the above additional work.  Thus, in the common
case when there is no on-going resize, operations do not incur the
overhead of checking moved flags or additional locations for a key.
Furthermore, since the count of blocks to be moved is never modified
when a resize is not in progress, each core can keep this counter in
its local cache, making the counter check very cheap.

}


\section{Crash Consistency and Performance on PMEM}
\label{sec:pmem}

\para{Crash safety}
Because \sysname{} is stable, crash consistency is straightforward.

Because all the data in levels 1 and 2 is accessed by computing an offset using
block numbers, there are no direct pointers into them, and so there is no need
for additional pointer swizzling. The linked lists in level 3 allocate nodes by
offset from a fixed array, which is mapped into PMEM. These offsets are then
used to reference the nodes.

Recall that all metadata is kept in volatile memory, so that only the data is
kept in PMEM. This data is stored in \rev{R2O4}{several large preallocated
sparse files on a PMEM-backed DAX file system, one each for levels 1 and 2, and
2 for level 3 (one for the linked list heads and one for allocating nodes).} A
specially designated value is used to indicate if a key or value is invalid,
and the key-value pair is considered invalid (and therefore free) if either key
or value is invalid.

An insert or deletion is persisted by writing the item into a slot (residing in
a block in a level on PMEM), and then performing a cache line writeback
instruction followed by an sfence, using PMDK~\cite{pmdk}. One small issue is
that persistent memory guarantees atomicity only for 8-byte stores, but we must
write 16 bytes to insert a key-value pair. However, because the key-value pair
is considered invalid if either key or value is invalid, we can store them in a
slot in either order, or the stores can even be reordered by the CPU, and the
hash table will always be in a consistent state. This eliminates the need for a
fence between storing the value and storing the key.

\rev{R2O4}{A global metadata file is used to store the initial size of the
array as well as the number of (doubling) resizes that have been performed.
Note that this file is only modified when a resize is initiated.}
Resizes first initialize the new PMEM data region to consist of invalid
key-value pairs, then updates and persists the table size in the global
metadata file, before updating the size in volatile memory.

Recovery consists of reading through the data array and rebuilding the metadata
for each valid key-value pair found. Because data from an in-progress resize
may not have been moved, recovery must check that each key-value pair is in the
correct block, and move it if it is not. Because this can be performed using a
sequential scan, the process is efficient.

\rev{R2O4, R4D1}{For example, consider a table initialized with $2^{24}=16777216$
level 1 slots (18874368 slots total in levels 1 and 2), into which is inserted
$2^{26} * 1.07 \approx 71.8\textrm{M}$ items, which causes 2 resizes, after
which the table is dismounted or crashes (dismount only performs deallocation).
Recovery on a single thread then takes 0.48\,s, recovering 173\,M slots per
second and 148\,M items per second (roughly 63$\times$ faster than individual
insertions). Furthermore this process is easily parallelized.}

\rev{R4D3}{
\para{Performance}
Changes to the hash table (i.e. inserts, deletes, and updates), modify
a single PMEM cache line unless they go to level 3, which we show in
our experiments is extremely rare.  Positive queries almost always
access a single PMEM cache line, plus occasional additional cache
lines from false positives in the metadata.  Negative queries also
almost always touch only a single PMEM cache line to examine the head
of the queried key's bucket in level 3 (plus, like other queries, any
false positives from the metadata checks in level 1 and 2).  We could
eliminate even that PMEM access by maintaining in-DRAM metadata about
the emptiness of each bucket in level 3, but we have not found it
necessary to do so.  Since inserts, deletes, and updates must query
for the target key, they may also occasionally access (but not modify)
extra PMEM cache lines due to metadata false positives.

So, in summary, all operations access a single PMEM cache line in the
common case.
}


\section{Experiments}\label{sec:eval}

In this section, we evaluate the performance of \sysname hash table. We compare
\sysname against two state-of-the-art concurrent PMEM hash tables,
Dash~\cite{LuHaWa20} and CLHT~\cite{david2015asynchronized} from
the RECIPE library~\cite{LeeEtAl19-Recipe}.
In our evaluation, we have used the Dash-Extendible Hashing (Dash-EH) variant
from the Dash-enabled hash tables. Dash-EH offers faster performance compared to
other Dash variants. For CLHT, we have used the CLHT\_LB\_RES variant which is
lock-based and supports resizing. The CLHT\_LB\_RES variant is ported to PMEM in
the RECIPE library~\cite{LeeEtAl19-Recipe}.

While \sysname primarily targets PMEM, its design also yields strong DRAM
performance. Therefore, we additionally evaluate \sysname on DRAM. On DRAM, we
compare \sysname against state-of-the-art concurrent in-memory hash tables,
libcuckoo~\cite{LiAn14}, Intel's threading building blocks (TBB) hash
table~\cite{Pheatt08}, and CLHT~\cite{david2015asynchronized}. Similar to the
PMEM evaluation, we use CLHT\_LB\_RES variant of CLHT.

We evaluate hash table performance on three fundamental operations: insertions,
lookups, and deletions. We evaluate lookups both for keys that are present and
for keys that are not present in the hash table. We also evaluate these hash
tables on multiple application workloads from YCSB~\cite{CooperSiTa10}, as well
as for space efficiency and scalability.
\rev{R2O2, R4D23} {
In \sysname, we use MurmurHash to compute the $h_0$, $h_1$, and $h_2$.}

The goal of this section is to answer the following questions:

\begin{enumerate}
\item How does \sysname performance compare to other hash tables when hash
   tables are on PMEM?
\item How does \sysname scale with increasing number of threads compared to
  other hash tables?
\item How does \sysname compare to other hash tables in terms of space
  efficiency and instantaneous throughput?
\item What is the impact of hash table resizing on the latency of operations in
  \sysname?
\item How does \sysname compare to libcuckoo, TBB, and CLHT when hash tables are
  in DRAM?
\end{enumerate}

\subsection{Other hash tables}\label{sec:other-ht}
%

CLHT~\cite{david2015asynchronized} and TBB~\cite{Pheatt08} are both
chaining-based hash tables. They use a linked list to handle collisions. They
dynamically allocate a new a node and add it to the linked list to insert a key
if the head bin is already occupied. Their space usage is also suboptimal
compared to other hash table designs.
Dash~\cite{LuHaWa20} is based on extendible hashing~\cite{fagin1979extendible}.
A directory is used to index (or store pointers to) the blocks that store
key-value pairs. Similar to chaining-based hash tables, Dash also perform
dynamic allocation of nodes at run time to add new keys.
In cuckoo hash table~\cite{LiAn14},  a pre-allocated array of blocks is
maintained where each block can store up to four key-value pairs. Cuckoo
hashing~\cite{Pagh:CuckooHash,PaghRo01} is used to perform insertions. Unlike
chaining-based or extendible hashing, there is not dynamic allocation in cuckoo
hash table.


\begin{figure*}
   \centering
   \ref{pmem-legend}\\
   \begin{subfigure}{0.25\linewidth}
      \begin{tikzpicture}
         \begin{axis}[
               MicroPlot,
               ymax = 30000000,
               legend entries={\color{black}\sysname, \color{black}\dash, \color{black}CLHT},
               legend columns=3,
               legend to name={pmem-legend}
            ]
            \addplot[IcebergStyle] table[x=threads, y=insert] {data/iceberg_micro_pmem_iine_blocklock.csv};
            \addplot[DashStyle]    table[x=threads, y=insert] {data/dash_micro_pmem.csv};
            \addplot[CLHTStyle]    table[x=threads, y=insert] {data/clht_lb_res_micro_pmem.csv};
         \end{axis}
      \end{tikzpicture}
      \caption{Insertion}
      \label{pmem-insertion}
   \end{subfigure}%
   \begin{subfigure}{0.25\linewidth}
      \begin{tikzpicture}
         \begin{axis}[
               MicroPlot,
               ymax = 120000000,
            ]
            \addplot[IcebergStyle] table[x=threads, y=positive] {data/iceberg_micro_pmem_iine_blocklock.csv};
            \addplot[DashStyle]    table[x=threads, y=positive] {data/dash_micro_pmem.csv};
            \addplot[CLHTStyle]    table[x=threads, y=positive] {data/clht_lb_res_micro_pmem.csv};
         \end{axis}
      \end{tikzpicture}
      \caption{Positive Query}
      \label{pmem-positive}
   \end{subfigure}%
   \begin{subfigure}{0.25\linewidth}
      \begin{tikzpicture}
         \begin{axis}[
               MicroPlot,
               ymax = 120000000,
            ]
            \addplot[IcebergStyle] table[x=threads, y=negative] {data/iceberg_micro_pmem_iine_blocklock.csv};
            \addplot[DashStyle]    table[x=threads, y=negative] {data/dash_micro_pmem.csv};
            \addplot[CLHTStyle]    table[x=threads, y=negative] {data/clht_lb_res_micro_pmem.csv};
         \end{axis}
      \end{tikzpicture}
      \caption{Negative Query}
      \label{pmem-negative}
   \end{subfigure}%
   \begin{subfigure}{0.25\linewidth}
      \begin{tikzpicture}
         \begin{axis}[
               MicroPlot,
               ymax = 30000000,
            ]
            \addplot[IcebergStyle] table[x=threads, y=remove] {data/iceberg_micro_pmem_iine_blocklock.csv};
            \addplot[DashStyle]    table[x=threads, y=remove] {data/dash_micro_pmem.csv};
            \addplot[CLHTStyle]    table[x=threads, y=remove] {data/clht_lb_res_micro_pmem.csv};
         \end{axis}
      \end{tikzpicture}
      \caption{Deletion}
      \label{pmem-deletion}
   \end{subfigure}%
   \\
  \begin{subfigure}{0.25\linewidth}
      \begin{tikzpicture}
         \begin{axis}[
               YCSBPlot,
               ymax = 20000000,
            ]
            \addplot[IcebergStyle] table[x=threads, y=load] {data/iceberg_ycsb_pmem_iine_blocklock.csv};
            \addplot[DashStyle]    table[x=threads, y=load] {data/dash_ycsb_pmem.csv};
            \addplot[CLHTStyle]    table[x=threads, y=load] {data/clht_lb_res_ycsb_pmem.csv};
         \end{axis}
      \end{tikzpicture}
      \caption{YCSB Load}
      \label{pmem-load}
   \end{subfigure}%
   \begin{subfigure}{0.25\linewidth}
      \begin{tikzpicture}
         \begin{axis}[
               YCSBPlot,
               ymax = 30000000,
            ]
            \addplot[IcebergStyle] table[x=threads, y=a] {data/iceberg_ycsb_pmem_iine_blocklock.csv};
            \addplot[DashStyle]    table[x=threads, y=a] {data/dash_ycsb_pmem.csv};
            \addplot[CLHTStyle]    table[x=threads, y=a] {data/clht_lb_res_ycsb_pmem.csv};
         \end{axis}
      \end{tikzpicture}
      \caption{YCSB Run A}
      \label{pmem-runa}
   \end{subfigure}%
   \begin{subfigure}{0.25\linewidth}
      \begin{tikzpicture}
         \begin{axis}[
               YCSBPlot,
               ymax = 60000000,
            ]
            \addplot[IcebergStyle] table[x=threads, y=b] {data/iceberg_ycsb_pmem_iine_blocklock.csv};
            \addplot[DashStyle]    table[x=threads, y=b] {data/dash_ycsb_pmem.csv};
            \addplot[CLHTStyle]    table[x=threads, y=b] {data/clht_lb_res_ycsb_pmem.csv};
         \end{axis}
      \end{tikzpicture}
      \caption{YCSB Run B}
      \label{pmem-runb}
   \end{subfigure}%
   \begin{subfigure}{0.25\linewidth}
      \begin{tikzpicture}
         \begin{axis}[
               YCSBPlot,
               ymax = 80000000,
            ]
            \addplot[IcebergStyle] table[x=threads, y=c] {data/iceberg_ycsb_pmem_iine_blocklock.csv};
            \addplot[DashStyle]    table[x=threads, y=c] {data/dash_ycsb_pmem.csv};
            \addplot[CLHTStyle]    table[x=threads, y=c] {data/clht_lb_res_ycsb_pmem.csv};
         \end{axis}
      \end{tikzpicture}
      \caption{YCSB Run C}
      \label{pmem-runc}
   \end{subfigure}%

\caption{Performance of hash tables on PMEM on micro and YCSB workloads.
(Throughput is Million ops/second)}
\label{fig:performance-pmem}
\end{figure*}
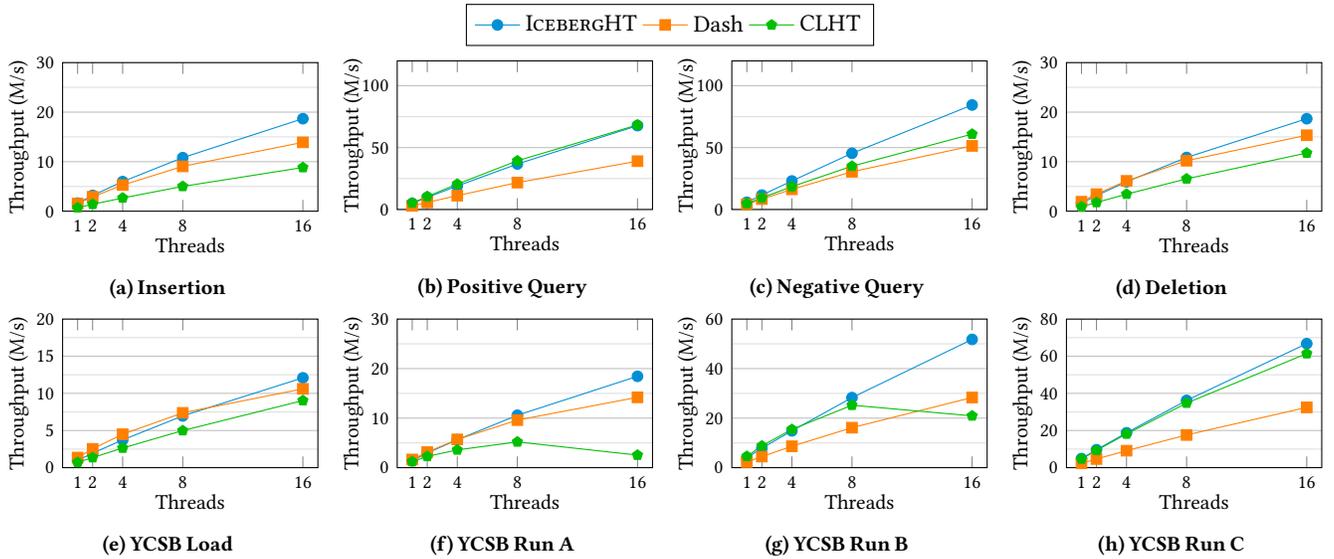

\subsection{Experimental setup}

In our evaluation, we perform two sets of benchmarks: micro benchmarks and
application workloads. For both types of benchmarks, we evaluate the scalability
of hash table operations with increasing number of threads.
%

\para{Microbenchmarks}
We measure performance on insertions, deletions, and lookups which are
performed as follows.  We generate 64-bit keys and 64-bit values from
a uniform-random distribution to be inserted, removed or queried in
the hash table.  We configured each hash table to have as close to
$2^{26}$ slots as possible, and we filled each hash table to its
maximum recommended load factor.  Specifically, we configured CLHT to
use $2^{25}$ buckets, each with 3 slots\footnote{We also tried configured CLHT with $2^{26}/3$ slots, but its performance is much worse when the number of slots is not a power of $2$.}.  Dash and TBB were
initialized with a target size of $2^{26}$, libcuckoo was initialized
with $2^{26}$ slots, and \sysname was initialized with a front yard of
$2^{26}$ slots, for a total of $(1 + 1/8)2^{26}$ slots, when also
counting level 2.
We then inserted $0.95 N$ keys into each hash table, where $N$ is the
number of slots in the table (e.g. $N=3\times 2^{25}$ for CLHT, $(1+1/8)2^{26}$ for \sysname, and $2^{26}$ for all other hash tables).
We report the
aggregate throughput going from empty to 95\% full as the insertion throughput.

Once the data structure is 95\% full, we perform queries for keys that exist and
keys that do not exist in the hash table to measure the query throughput for
both positive and negative queries. For positive lookups, we query keys that are
already inserted and for negative lookups we generate a different set of 64-bit
keys than the set used for insertion. The negative lookup set contains almost
entirely non-existent keys because the key space is much bigger than the number
of keys in the insertion set. Empirically, 99.9989\% of keys in the negative
lookup query set were non-existent in the input set.
%
%
We then remove a random selection of existing keys from the hash table until
its load factor reaches $\approx50\%$ and report the aggregate deletion
throughput.


In order to isolate the performance differences between the hash tables, we do
not count the time required to generate the random inputs to the hash tables.
%

\para{Application workloads}
We also measure the hash table performance on YCSB~\cite{CooperSiTa10}
workloads. We use YCSB workloads A, B, and C in our evaluation. Workload A has a
mix of 50/50 reads and writes. Workload B has a 95/5 reads/write mix. Workload C
is 100\% read. We do not include other YCSB workloads as operations required by
other workloads are not supported by these hash tables.
The YCSB workloads consist of a load and a run phase. In the load phase, we
insert 64M keys and values (64-bit keys and 64-bit values same as in the micro
benchmark) generated using a uniform random distribution. The load phase
configuration is the same for all three workloads. The keys are generated using the
YCSB workload generator. All the hash tables are configured as in the microbenchmarks, except we target $2^{24}
\approx 17M$ slots instead of $2^{26}$.  This ensures that they resize twice during the load phase of $64$M keys.
In the run phase, we perform a mixed workload depending upon the workload type.
In order to make the performance in the run phase a representative of the actual
performance of the hash tables, we make sure that the run phase is large enough
so that the table doubles its size. Doing this enables us to include the impact
of a resize on the insert and query operations in the hash table and ensures
that resizes do not unfairly bias the benchmarks. 

We achieve this by keeping the number of keys inserted in the run phase the same as
the number of keys that are present in the hash table at the start of the run
phase.
Therefore, the run phase in workload A consists of 128M operations out of which
64M (50/50 reads and writes) are inserts. Similarly, the run phase in workload B
consists of 1.28B operations out of which 64M are inserts (95/5 reads/write
mix). Workload C does not have any inserts and only contains 64M read
operations.

\para{Speed/space tradeoff}
To measure how different hash tables can trade space efficiency for speed,
we fill the hash table from empty to 95\% full in increments of 5\%.
Data items are generated as in the microbenchmarks.
We record the throughput and max RSS (resident-set size) in each increment.
To report the memory usage of the hash table we subtract the total memory
allocated by the driver process from the Max RSS reported by \emph{getrusage}.

To measure the space usage of PMEM hash tables, we measure the size of the file
created by the hash tables on PMEM. In \sysname, the PMEM files are created
using a sparse flag therefore the space can be measured by counting number of
allocated blocks in the file. For Dash and CLHT, the files created are not
sparse. Therefore, we measure the space of the hash tables by computing the
minimum file size required by Dash and CLHT to complete the benchmark 
\rev{R4D16} {without complete doubling.} We start
with sizing the file equal to the size of the dataset and keep increasing the
size in increments of 100M until the benchmarks completes successfully.
We report the space usage as \emph{space efficiency} which is the ratio of the
size of the dataset over the size of the hash table.
All the instantaneous performance benchmarks are performed using a single
thread.

 
\para{System specification}
All the experiments were run on an Intel(R) Xeon(R) Gold 5218 CPU @ 2.30GHz with
two NUMA nodes, 16 cores per nodes, and 44M L3 cache. The machine has 192GiB of
DRAM running Linux kernel 5.4.0-70-generic. We restrict our runs to all the cores
on a single NUMA node to avoid NUMA effects in the performance.
For all the benchmarks, we increase the number of threads by powers of two
starting from 1 up to 16 (i.e., 1, 2, 4, 8, and 16) which is the maximum number
of cores on a NUMA node.

\para{PMEM setup}
The machine has 1536GiB of Intel Optane 100 series persistent memory in 12
128GiB DIMMs, 6 per socket. The PMEM is configured to use AppDirect mode and is
accessed using fsdax on an ext4 filesystem. This filesystem is configured with
a 2MiB stride to enable 2MiB huge page faults, and mounted using dax.
\sysname stores its data to PMEM by creating large sparse files at
initialization for each level, and only using (and therefore populating) a
prefix of each file.

\subsection{PMEM benchmarks}

\para{Micro benchmarks}
\Cref{fig:performance-pmem} shows the performance and scaling of \sysname, Dash, and
CLHT on microbenchmarks in PMEM.

\sysname always performs faster than Dash and CLHT. Specifically, it
is $1.1\times$--$2.7\times$ faster for insert, query, and remove
operations.

For all four operation types, all the hash tables scale almost
linearly. The scaling ratio (i.e., the ratio of the relative
throughput and the relative number of threads for a system) of
\sysname is 0.67, Dash is 0.56, and CLHT is 0.77.

\begin{table}[t]
   \small
   \resizebox{\linewidth}{!}{
   \begin{tabular}{| S |  r r r | r r r |}
      \toprule
      &  \multicolumn{3}{c|}{Insertions} & \multicolumn{3}{c|}{Positive Queries} \\
      \midrule
      \textbf{Percentile} & \textbf{\sysname} & \textbf{Dash} & \textbf{CLHT} & \textbf{\sysname} & \textbf{Dash} & \textbf{CLHT} \\
      \midrule
      50    & 353\,\si{\nano\second}   & 830\,\si{\nano\second}   & 1.29\,\si{\micro\second}  & 602\,\si{\nano\second}   & 834\,\si{\nano\second}   & 974\,\si{\nano\second}     \\
      95    & 1.11\,\si{\micro\second} & 2.39\,\si{\micro\second} & 2.63\,\si{\micro\second}  & 1.49\,\si{\micro\second} & 2.16\,\si{\micro\second} & 2.14\,\si{\micro\second}   \\
      99    & 1.97\,\si{\micro\second} & 3.50\,\si{\micro\second} & 3.72\,\si{\micro\second}  & 1.96\,\si{\micro\second} & 2.74\,\si{\micro\second} & 3.41\,\si{\micro\second}    \\
      99.9  & 249.88\,\si{\micro\second} & 78.4\,\si{\micro\second} & 5.68\,\si{\micro\second}  & 2.42\,\si{\micro\second} & 4.35\,\si{\micro\second} & 5.24\,\si{\micro\second}    \\
      99.99 & 277.52\,\si{\micro\second} & 103\,\si{\micro\second}  & 16.49\,\si{\micro\second} & 5.24\,\si{\micro\second} & 7.91\,\si{\micro\second} & 15.60\,\si{\micro\second}    \\
      max   & 37.09\,\si{\milli\second} & 8.62\,\si{\milli\second} & 12.31\,\si{\second}       & 259.65\,\si{\micro\second} & 16.0\,\si{\milli\second} &  153.21\,\si{\micro\second}  \\
      \bottomrule
   \end{tabular}
   }
   \caption{Percentile latencies in \sysname, Dash and CLHT for YCSB workload A run on PMEM using 16 threads.}
   \label{tab:latency-dist-pmem}
   \vspace{-3em}
\end{table}

\para{YCSB workloads}
\Cref{fig:performance-pmem} shows the performance
of \sysname, Dash, and CLHT for three YCSB workloads on PMEM.

For the load phase of these workloads, \sysname is faster than other hash
tables. Specifically, it is between $1.1\times$ and $2.5\times$ faster than Dash
and CLHT. For the run phase all three workloads, \sysname is faster compared to
both Dash and CLHT. CLHT performance for workload C is closer to \sysname.
Workload C consists of 100\% queries. And this observation is consistent with
the positive query performance in microbenchmarks.

The YCSB benchmarks show that \sysname performs better than other hash tables
when the workload also involves resizing the hash table as the YCSB load phase
and workloads A and B require the hash tables to resize at least twice. 
Moreover, similar to the microbenchmarks, the load performance of \sysname
scales almost linearly with increasing number of threads.

For different workload types (A, B, and C), the performance of \sysname is
always better than other hash tables and also scales almost linearly with
increasing number of threads.

\rev{R4D2, R4D3} { \para{Discussion} The high performance of \sysname both
  on the micro and YCSB workloads is primarily due to the small number of PMEM
  accesses during insert, query, and delete operations.  During insert
  and delete operations, we only perform a single PMEM write. During
  query operations, we usually perform at most a single PMEM read
  (unless there is a false positive in the metadata).  Furthermore,
  since most items are in level 1, most inserts, deletes, and positive
  queries access only a single DRAM cache line, as well.  Negative
  queries must access 4 DRAM cache lines (1 metadata cache line for
  level 1, 2 for level 2, and 1 for level 3), but they usually do not
  have to access a PMEM cache line at all.  Finally, metadata searches
  are implemented using vector instructions, so they take constant time
  even though our buckets are larger than a cache line.
%
}

\para{Insert and query latency}
%
\Cref{tab:latency-dist-pmem} shows the 50, 95, 99,
99.9, and 99.99 percentiles and the worst case for insert and positive query
operations in the benchmarked hash tables.

On PMEM, Dash has slower latency up to 99.99 percentile compared to \sysname for
both inserts and queries. However, Dash is $2\times$ faster for the worst-case
insert latency and about 50\% slower for the worst-case query latency.

CLHT has the worst-case insert latency of 12 seconds. This is because during a
resize operation all active inserts are stopped and insert threads help to move
the keys from the old hash table to the new one.
In CLHT, the query latency is always good. This is because the queries can
always perform probes on the old copy of the hash table even when the resize is
active. Queries are never blocked in CLHT.
CLHT performs resizes by allocating a new hash table of twice the size and moving
key-value pairs from the old hash table to the new one. 

The latency of operations is computed during the YCSB workload A run that
contains insert and positives queries (50/50). The workload is configured so
that hash tables must perform at least one resize during the run. All the hash
tables are run using 16 threads.
Comparing the latency of operations during a workload run helps explain the
impact of a resize on the worst case latency of operations.

\begin{table}[t]
\begin{tabular}{| l | r |}
\toprule
  \textbf{Hash table} & \textbf{Space Efficiency} \\
\midrule
\sysname  & 85\% \\
Dash      & 69\% \\
CLHT      & 33\% \\
\bottomrule
\end{tabular}
\caption{Space efficiency of PMEM hash tables. Space efficiency is
  the ratio of Data size over hash table size. We compute the space efficiency
   after inserting $0.95N$ keys-value pairs in the hash table where $N$ is the
  initial capacity.}
  \label{tab:intro-space}
  \vspace{-3em}
\end{table}

\rev{R4D1}{
\para{Space efficiency in PMEM}
\Cref{tab:intro-space} shows the
space efficiency of PMEM hash tables. Both Dash and CLHT have low space
efficiency compared to \sysname. \sysname PMEM representation is 1.2GB for a
dataset size of 1.06GB ($2^{26}*1.07$ 8 Byte keys and values) and in-memory
representation is $~\approx80$MB.
}

\begin{figure*}[t]
  \centering
  \ref{dram-bar-legend}
  \begin{subfigure}{0.24\linewidth}
  \begin{tikzpicture}
    \begin{axis}[
        IntroBarPlotWrite,
        legend entries = {IcebergHT, Cuckoo, TBB, CLHT},
        legend columns = 4,
        legend cell align = {left},
        legend to name={dram-bar-legend}
      ]
      \addplot[IcebergStyle, fill]    table {data/iceberg_micro_dram_bar_write_blocklock.csv};
      \addplot[CuckooStyle, fill]    table {data/cuckoo_micro_dram_bar_write.csv};
      \addplot[TBBStyle, fill]    table {data/tbb_micro_dram_bar_write.csv};
      \addplot[CLHTStyle, fill]    table {data/clht_lb_res_micro_dram_bar_write.csv};
    \end{axis}
  \end{tikzpicture}
\end{subfigure}
  \hspace{10pt}
\begin{subfigure}{0.24\linewidth}
  \begin{tikzpicture}
    \begin{axis}[
        IntroBarPlotRead,
      ]
      \addplot[IcebergStyle, fill]    table {data/iceberg_micro_dram_bar_read_blocklock.csv};
      \addplot[CuckooStyle, fill]    table {data/cuckoo_micro_dram_bar_read.csv};
      \addplot[TBBStyle, fill]    table {data/tbb_micro_dram_bar_read.csv};
      \addplot[CLHTStyle, fill]    table {data/clht_lb_res_micro_dram_bar_read.csv};
    \end{axis}
  \end{tikzpicture}
\end{subfigure}
%
  \begin{subfigure}{0.24\linewidth}
  \begin{tikzpicture}
    \begin{axis}[
        YcsbBarPlotLa,
      ]
      \addplot[IcebergStyle, fill]    table {data/iceberg_ycsb_dram_bar_la_blocklock.csv};
      \addplot[CuckooStyle, fill]    table {data/cuckoo_ycsb_dram_bar_la.csv};
      \addplot[TBBStyle, fill]    table {data/tbb_ycsb_dram_bar_la.csv};
      \addplot[CLHTStyle, fill]    table {data/clht_lb_res_ycsb_dram_bar_la.csv};
    \end{axis}
  \end{tikzpicture}
\end{subfigure}
  \hspace{-10pt}
\begin{subfigure}{0.24\linewidth}
  \begin{tikzpicture}
    \begin{axis}[
        YcsbBarPlotBc,
      ]
      \addplot[IcebergStyle, fill]    table {data/iceberg_ycsb_dram_bar_bc_blocklock.csv};
      \addplot[CuckooStyle, fill]    table {data/cuckoo_ycsb_dram_bar_bc.csv};
      \addplot[TBBStyle, fill]    table {data/tbb_ycsb_dram_bar_bc.csv};
      \addplot[CLHTStyle, fill]    table {data/clht_lb_res_ycsb_dram_bar_bc.csv};
    \end{axis}
  \end{tikzpicture}
\end{subfigure}
  \caption{Throughput for insertions, deletions, and queries (positive and
  negative) using 16 threads for DRAM hash tables. The throughput is
  computed by inserting $0.95N$ keys-value pairs where $N$ is the
  initial capacity of the hash table. (Throughput is Million ops/second)}
  \label{fig:throughput-dram-bar}
\end{figure*}
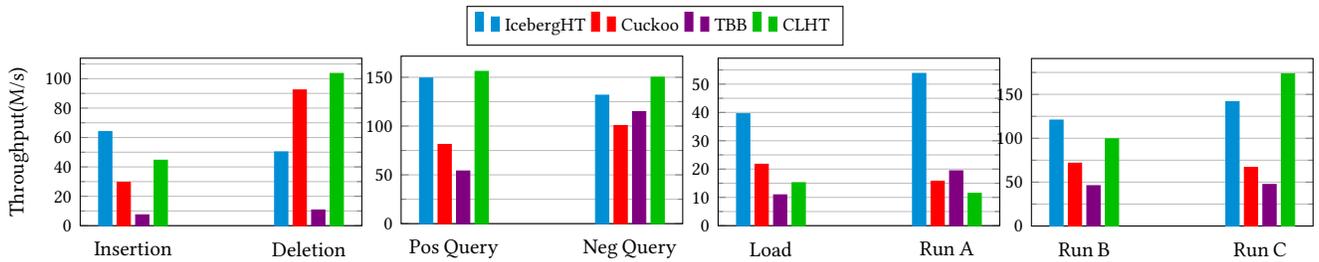

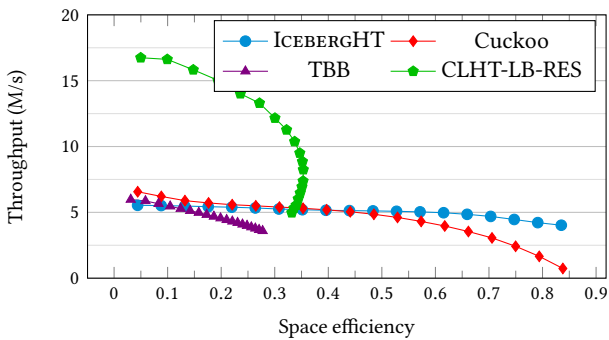
\begin{figure}
   \centering
      \begin{tikzpicture}
         \begin{axis}[
               InstanSePlot,
               legend entries={\color{black}\sysname, \color{black}\cuckoo, \color{black}\tbb, \color{black}\clht},
               legend columns=2
            ]
            \addplot[IcebergStyle] table[x=se, y=throughput] {data/iceberg_micro_dram_instan_se_blocklock.csv};
            \addplot[CuckooStyle]  table[x=se, y=throughput] {data/cuckoo_micro_dram_instan_se.csv};
            \addplot[TBBStyle]     table[x=se, y=throughput] {data/tbb_micro_dram_instan_se.csv};
            \addplot[CLHTStyle]    table[x=se, y=throughput] {data/clht_lb_res_micro_dram_instan_se.csv};
         \end{axis}
      \end{tikzpicture}
  \caption{Insertion throughput and space efficiency performance of hash tables
  in DRAM. (Throughput is Million ops/second)}
\label{fig:performance-micro-dram-instan-se}
  \vspace{-3pt}
\end{figure}

\subsection{DRAM performance}

\para{Micro and YCSB benchmarks.}
\Cref{fig:throughput-dram-bar} shows the performance of \sysname, cuckoo, TBB,
and CLHT on microbenchmarks and YCSB workloads using 16 threads in
DRAM.

\sysname is $2.3\times$--$9.1\times$ faster for insertions and
$1.7\times$--$2.6\times$ faster for lookups than the libcuckoo and TBB. For deletions, \sysname is up to $5.3\times$ faster than TBB but
$\approx50\%$ slower than libcuckoo.
\sysname is also faster than CLHT for insertions. However, CLHT has faster
deletions and query operations compared to \sysname. This is due the extra
overhead of one metadata probe in level 1 and two probes in level 2 in \sysname
in DRAM. These metadata probes are essential to avoid multiple cache line access
in the main table, especially on PMEM where accessing multiple locations in the
table can hurt performance.

\Cref{fig:throughput-dram-bar} shows the performance of \sysname and other hash
tables for YCSB workloads.
For the load phase of these workloads, \sysname is faster than other hash
tables. It is up to $2.2\times$ faster than libcuckoo, $4.4\times$ faster than
TBB, and $2.9\times$ faster then CLHT in DRAM. For workload C which contains all
queries, CLHT is faster than \sysname. This is similar to the query
workload results in the microbenchmarks.

The faster query performance of CLHT comes at a high space overhead.
Specifically, CLHT uses $3\times$ more space than \sysname.

\para{Insert and query latency in DRAM}
\Cref{tab:latency-dist-dram} shows the 50, 95, 99, 99.9, and 99.99 percentiles
and the worst case for insert and positive query operations in various hash
tables in DRAM.
The latency of operations is computed in the same way as it was done for the
PMEM benchmarks.

\sysname and libcuckoo have similar median insert latency but the worst case
latency is three orders of magnitude slower in libcuckoo. This is due to the
fact that \sysname performs resizing in a lazy dynamic manner which helps
to avoid stalling other operations during a big resize.
TBB's median insert latency is $2\times$ higher than \sysname and libcuckoo. But
TBB's worst-case latency is an order of magnitude faster than libcuckoo. This is
because resizes can be done fairly efficiently by splitting buckets in TBB and do
not require a complete rehashing of items.

libcuckoo has the lowest median query latency compared to \sysname and TBB. However,
the worst-case latency is again about three orders of magnitude slower than
\sysname.
TBB has the lowest worst-case query latency due to the fact the splitting a
bucket is fairly fast and can be achieved using a pointer swing. However, in
\sysname a few queries may have to wait if the block they want to look into is
getting fixed during a resize.

\subsection{Speed/space tradeoff}
\Cref{fig:performance-micro-dram-instan-se} shows the instantaneous
DRAM insertion throughput of \sysname, libcuckoo, TBB, and CLHT versus
their space efficiency . We compare instantaneous throughput versus
space efficiency only in DRAM only because it is not always possible
to measure the instantaneous space usage of PMEM-based hash tables
(see discussion above), whereas in DRAM we can always get the
MaxRSS.  The point of these experiments is to uncover the general
relationship between insertion performance and space usage.

As \Cref{fig:performance-micro-dram-instan-se} shows, CLHT's insertion
performance in DRAM comes at a high price in terms of space
efficiency.  CLHT never gets a space efficiency higher than 40\%.

\rev{R5O6}{CLHT space efficiency improves initially as the 3-entry bucket-heads
fill but then begins to decline as bucket-heads overflow, necessitating the
allocation of 3-entry overflow links in its chains.
In~\Cref{fig:performance-micro-dram-instan-se}, the change in the
space efficiency of the CLHT is marginal after 30\% and therefore these points
are clustered together.}

\Cref{fig:performance-micro-dram-instan-se} also shows that
\sysname offers both high space efficiency and high insertion throughput.
\sysname also has consistent insertion throughput irrespective of the
space usage. Interestingly, the throughput increases (beyond
80\%) as more keys end up in level 2 and 3. For example, going from 85\% to 90\%
load, $\approx47\%$ of the keys end up in level 2 and from 90\% to 95\% load,
almost $65\%$ keys end up in level 2.
Inserting keys in level 2 is comparatively faster than level 1 as level 2 is
much smaller in size compared to level 1. Due to the smaller size, a major
fraction of the level 2 can be cached in the last level cache (LLC).

The insertion throughput for both libcuckoo and TBB drops as the
space efficiency increases. For libcuckoo, the drop in the throughput is fairly sharp
above 70\% space efficiency. For TBB, the drop is consistent and gradual up to 95\%
space efficiency.

\begin{table}[t]
\begin{tabular}{| l | r r r |}
\toprule
  \textbf{Benchmark} & \textbf{Level 1} & \textbf{Level 2} & \textbf{Level 3} \\
\midrule
Micro           & 91.2\% & 8.7\% & 0.000082\% \\
YCSB load       & 95.9\% & 4.0\% & 0\% \\
YCSB Workload A & 95.8\% & 4.1\% & 0\% \\
YCSB Workload B & 95.8\% & 4.1\% & 0\% \\
\bottomrule
\end{tabular}
\caption{Distribution of keys across the three levels in \sysname hash table.}
  \label{tab:key-dist}
   \vspace{-3em}
\end{table}

\subsection{Distribution of keys in \sysname}
\Cref{tab:key-dist} shows the distribution of keys across the three levels in
\sysname. Most of the keys (>90\%) reside in level 1 across all the  benchmarks
and workloads. A small percentage of keys (<10\%) reside in
level 2 and almost no keys are found in level 3. This shows that the empirical
distribution of keys across different levels follows the theoretical guarantees
of Iceberg hashing.

Level 3 sees a tiny number of keys in the microbenchmark because, in the
microbenchmarks, we fill the table to 95\% load factor without resizing.
However, even at 95\% load factor, the number of keys in level 3 is negligible
and does not impact the query or deletion performance.

For YCSB workloads, we report the distribution after the load phase (which is
the same across the three workloads) and also after the run phase for workloads A
and B that contain new insertions. The \sysname hash table has default load
factor threshold of 85\% which means a resize is invoked when the hash table
reaches an 85\% load factor. This makes the hash table always have enough
space in levels 1 and 2 so level 3 remains empty.

\begin{table}
   \small
   \resizebox{\linewidth}{!}{
   \begin{tabular}{| S | r r r | r r r |}
      \toprule
      & \multicolumn{3}{c|}{Insertions} & \multicolumn{3}{c|}{Positive Queries} \\
      \midrule
      \textbf{Percentile} & \textbf{\sysname} & \textbf{libcuckoo} & \textbf{TBB} & \textbf{\sysname} & \textbf{libcuckoo} & \textbf{TBB} \\
      \midrule
      50    & 336\,\si{\nano\second}   & 264\,\si{\nano\second}   & 819\,\si{\nano\second}   & 290\,\si{\nano\second}   & 198\,\si{\nano\second}   & 494\,\si{\nano\second}   \\
      95    & 671\,\si{\nano\second}   & 2.02\,\si{\micro\second} & 1.59\,\si{\micro\second} & 548\,\si{\nano\second}   & 429\,\si{\nano\second}   & 955\,\si{\nano\second}   \\
      99    & 1.09\,\si{\micro\second} & 5.99\,\si{\micro\second} & 2.24\,\si{\micro\second} & 687\,\si{\nano\second}   & 562\,\si{\nano\second}   & 1.22\,\si{\micro\second}  \\
      99.9  & 22.03\,\si{\micro\second} & 19.8\,\si{\micro\second} & 6.52\,\si{\micro\second} & 979\,\si{\nano\second} & 836\,\si{\nano\second}   & 1.57\,\si{\micro\second}  \\
      99.99 & 29.08\,\si{\micro\second} & 219\,\si{\micro\second}  & 9.27\,\si{\micro\second} & 1.93\,\si{\micro\second} & 218\,\si{\micro\second}  & 4.97,\si{\micro\second}    \\
      max   & 345.34\,\si{\milli\second} & 2.05\,\si{\second}       & 734\,\si{\milli\second}  & 38.35\,\si{\micro\second} & 1.01\,\si{\second}       & 42.8\,\si{\micro\second}   \\
      \bottomrule
   \end{tabular}
   }
   \caption{Percentile latencies in \sysname, libcuckoo and TBB for YCSB workload A run on DRAM using 16 threads.}
   \label{tab:latency-dist-dram}
   \vspace{-3em}
\end{table}

\rev{R4D1}{
\subsection{Configuring front yard and back yard}
\Cref{tab:config-frontback-dram} shows the performance of \sysname with
different block sizes in front and back yards. The goal of these experiments is
to determine the best configuration of front and back yard to achieve high
performance and fill capacity. We vary the block sizes in front and back yard
and fill up each instance to 95\% load factor and evaluate the performance.

Reducing the number of blocks in L2 to 6 results in more items going into L3.
This results in faster operations overall. However, reducing the L2 blocks to 4
slows down the negative queries considerably due to a high fraction of items in
L3 which require pointer chasing during queries.
Reducing the block size in L1 to 32 increases the fraction of items going into
L2 and L3. This results in slowdown across the board.
This also means that if we size front and back yards equally then the
performance would be worse as more items would end up in L2/L3 causing extra
cache  misses.
}

\begin{table}
   \small
   \resizebox{\linewidth}{!}{
   \begin{tabular}{| c | r r r r | r r|}
      \toprule
      Block size & Insertions & Neg Queries & Pos queries & Deletions & \%L2 & \%L3 \\
      \midrule
      L1 64 L2 8  & 62.94 & 128.71 & 144.23 & 50.48  & 8.7   & 0.00007 \\
      L1 64 L2 6  & 65.58 & 129.27 & 149.27 & 53.77  & 7.0   & 0.007   \\
      L1 64 L2 4  & 64.36 & 115.07 & 152.12 & 51.60  & 5.4   & 0.27    \\
      L1 32 L2 8  & 53.99 & 109.28 & 129.54 & 45.97  & 17.7  & 0.03    \\
      L1 32 L2 6  & 54.75 & 109.15 & 133.71 & 49.31  & 13.7  & 0.06    \\
      L1 32 L2 4  & 53.20 & 95.99  & 140.08 & 46.33  & 10.38 & 0.64    \\
      \bottomrule
   \end{tabular}
   }
   \caption{Performance of \sysname for different front/backyard block sizes on
   DRAM using 16 threads. Throughput in Million/sec. Each instance is filled to
   95\% capacity.}
   \label{tab:config-frontback-dram}
   \vspace{-3em}
\end{table}

\section{Related work}\label{sec:related}

In this section, we will discuss various hash table implementations and their
applications. A discussion of various hash table designs used in our evaluation
is given in~\Cref{sec:other-ht}.  



\para{In-memory hash tables}
There are numerous in-memory hash table implementations such as sparse and dense
hash maps from Google~\cite{googlesparse}, the F14 hash table from
Facebook~\cite{F14}, the FASTER hash table from Microsoft~\cite{ChandramouliPrKo18},
the hash table in Intels' TBB library~\cite{Pheatt08}, the cuckoo hash
table~\cite{LiAn14}, the linear probing-based fast hash
table~\cite{maier2019dynamic,MaierSaDe19}, and the unordered map in C++ STL.
However, most of these hash tables only support single threaded operations.

MemC3~\cite{fan2013memc3} supports multiple readers but only a single writer. It
is based on optimistic concurrent cuckoo hashing. MemC3 also supports
variable-length keys and optimizes accesses using fingerprinting.
FASTER~\cite{ChandramouliPrKo18} further optimizes the implementation by storing
the tag in the higher order bits of the pointer. It also supports scaling out of
memory to a secondary storage device and supports crash safety using logging.
Libcuckoo~\cite{LiAn14} extends MemC3 to support multiple readers and writers.

\para{Persistent-memory hash tables}
Persistent memory offers byte-addressability and high capacity compared
to other traditional storage mediums. This makes PMEM an attractive medium for
building dynamic hash tables. 
Recently numerous hash tables have been developed for
PMEM~\cite{LuHaWa20,level-hashing,cceh,nvc-hashmap,LeeEtAl19-Recipe,phprx,pfht}.
The main goal of PMEM-based hash tables is to reduce the number of write
operations during an insert/remove while still support efficient queries.

PFHT~\cite{pfht} reduces the number of writes using a two-level scheme similar
to \sysname where the second level acts as a stash (or backyard). Similar to
level 3 in \sysname PFHT also uses linked lists to store items in the stash.
Path hashing~\cite{zuo2017write} optimizes the storage in the stash by
reorganizing it into a tree structure. This lowers the search costs in the
stash.
Level hashing~\cite{level-hashing,zuo2018write} is another two-level scheme that
bounds the search cost to at most four buckets.

CCEH~\cite{cceh} is based on extendible hashing~\cite{fagin1979extendible}. It
is crash-consistent and the extendible design helps to avoid rehashing all the
items after a resize. The queries tend to be slower due to random memory access.
Therefore, it bounds the probing length to a few cachelines but that in turn
leads to low load factors.
NVC-hashmap~\cite{nvc-hashmap} presents a lock-free design for a PMEM-based hash
table. The lock-free design though suitable for PMEM has added implementation
complexity and makes searching slower due to pointer chasing.

\para{Applications}
Hash tables are widely used to maintain symbol tables in compilers, implement
caches, index databases, manage memory pages in Linux, implement routing
tables, and to build inverted indexes for document search.
Examples of such systems are Redis~\cite{redis}, Memcached~\cite{memcached},
Cassandra~\cite{cassandra}, DynamoDB~\cite{dynamo}, MongoDB~\cite{mongo}, etc.
These implementations have been further improved in follow up works such as
MemC3~\cite{fan2013memc3}, MICA~\cite{lim2014mica}, and SILT~\cite{lim2011silt}.


\section{Discussion}\label{sec:discussion}
%
We attribute the high performance and space-efficiency to stability and low
associativity. Stability helps in achieving a faster inserts. Low associativity
helps in getting faster query performance. Iceberg hashing achieves both
stability and low associativity at the same time.


\sysname insertion performance with 16 threads is about 70\% of the hardware
limit. The 30\% overhead in the insert operation is due the overhead of
maintaining transient information, e.g., to update the metadata and increment
counters for resize checks. We were able to get to 85\% of the hardware limit by
commenting out counter-maintenance code and using huge pages.
For query performance, the overhead is about 50\%. Some of this overhead is due
to the same factors as in the insert operation. However, the query operation has
other overheads that results in extra PMEM access. For example, there is a
25\% chance of a collision in the metadata fingerprints that results in
extra PMEM accesses during the query operation.

In the DRAM setting (where the hash table resides in DRAM), the cost of metadata
accesses is a non-trivial fraction of the overall operation cost. Therefore, each query
operation incurs at least two cache line misses. CLHT on the other hand performs
a single cache line miss for most of the keys.  This results in \sysname having
a slightly slower query and deletion performance compared to CLHT.

\rev{R4D15, R5O4}{ Our implementation supports 8-byte keys and 8-byte
  values.  As in other hash-table designs, such as Dash, this core
  functionality can be extended to variable-length keys and values by
  storing pointers to the actual keys and values in the hash table.
}




\section*{Acknowledgments}

We gratefully acknowledge support from NSF grants
CCF 805476, 
CCF 822388,   
CCF 1724745,  
CCF 1715777,  
CCF 1637458,  
IIS 1541613,  
CNS 2118620,  
CNS 1938180,  
CCF 2106999,  
CNS 1408695, 
CNS 1755615, 
CCF 1439084, 
CCF 1725543, 
CSR 1763680, 
CCF 1716252, 
CCF 1617618, 
CNS 1938709, 
IIS 1247726, 
CNS-1938709.   
%
%
%
%
%
%
Kuszmaul was funded by a John and Fannie Hertz Fellowship. Kuszmaul was also
partially sponsored by the United States Air Force Research Laboratory and the
United States Air Force Artificial Intelligence Accelerator and was accomplished
under Cooperative Agreement Number FA8750-19-2-1000. The views and conclusions
contained in this document are those of the authors and should not be
interpreted as representing the official policies, either expressed or implied,
of the United States Air Force or the U.S. Government. The U.S. Government is
authorized to reproduce and distribute reprints for Government purposes
notwithstanding any copyright notation herein.


\bibliographystyle{ACM-Reference-Format}

\bibliography{bibliography}


\begin{thebibliography}{49}


\ifx \showCODEN    \undefined \def \showCODEN     #1{\unskip}     \fi
\ifx \showDOI      \undefined \def \showDOI       #1{#1}\fi
\ifx \showISBNx    \undefined \def \showISBNx     #1{\unskip}     \fi
\ifx \showISBNxiii \undefined \def \showISBNxiii  #1{\unskip}     \fi
\ifx \showISSN     \undefined \def \showISSN      #1{\unskip}     \fi
\ifx \showLCCN     \undefined \def \showLCCN      #1{\unskip}     \fi
\ifx \shownote     \undefined \def \shownote      #1{#1}          \fi
\ifx \showarticletitle \undefined \def \showarticletitle #1{#1}   \fi
\ifx \showURL      \undefined \def \showURL       {\relax}        \fi
\providecommand\bibfield[2]{#2}
\providecommand\bibinfo[2]{#2}
\providecommand\natexlab[1]{#1}
\providecommand\showeprint[2][]{arXiv:#2}

\bibitem[\protect\citeauthoryear{Amble and Knuth}{Amble and Knuth}{1974}]%
        {AmbleKn74}
\bibfield{author}{\bibinfo{person}{Ole Amble} {and}
  \bibinfo{person}{Donald~Ervin Knuth}.} \bibinfo{year}{1974}\natexlab{}.
\newblock \showarticletitle{Ordered hash tables}.
\newblock \bibinfo{journal}{\emph{Comput. J.}} \bibinfo{volume}{17},
  \bibinfo{number}{2} (\bibinfo{date}{Jan.} \bibinfo{year}{1974}),
  \bibinfo{pages}{135--142}.
\newblock
\urldef\tempurl%
\url{https://doi.org/10.1093/comjnl/17.2.135}
\showDOI{\tempurl}


\bibitem[\protect\citeauthoryear{Annonymous}{Annonymous}{2022}]%
        {bender2021all}
\bibfield{author}{\bibinfo{person}{Annonymous}.}
  \bibinfo{year}{2022}\natexlab{}.
\newblock \showarticletitle{All-purpose hashing}.
\newblock \bibinfo{journal}{\emph{Manuscript}} (\bibinfo{year}{2022}).
\newblock


\bibitem[\protect\citeauthoryear{{Apache}}{{Apache}}{[n.d.]}]%
        {cassandra}
\bibfield{author}{\bibinfo{person}{{Apache}}.}
  \bibinfo{year}{[n.d.]}\natexlab{}.
\newblock \bibinfo{title}{{Cassandra}}.
\newblock \bibinfo{howpublished}{\url{http://cassandra.apache.org}}.
\newblock


\bibitem[\protect\citeauthoryear{Bender, Christensen, Conway, Farach{-}Colton,
  Johnson, and Tsai}{Bender et~al\mbox{.}}{2019}]%
        {DBLP:conf/soda/BenderCCFJT19}
\bibfield{author}{\bibinfo{person}{Michael~A. Bender}, \bibinfo{person}{Jake
  Christensen}, \bibinfo{person}{Alex Conway}, \bibinfo{person}{Martin
  Farach{-}Colton}, \bibinfo{person}{Rob Johnson}, {and}
  \bibinfo{person}{Meng{-}Tsung Tsai}.} \bibinfo{year}{2019}\natexlab{}.
\newblock \showarticletitle{Optimal Ball Recycling}. In
  \bibinfo{booktitle}{\emph{{SODA}}}. \bibinfo{publisher}{{SIAM}},
  \bibinfo{pages}{2527--2546}.
\newblock


\bibitem[\protect\citeauthoryear{Celis, Larson, and Munro}{Celis
  et~al\mbox{.}}{1985}]%
        {CelisLaMu85}
\bibfield{author}{\bibinfo{person}{Pedro Celis}, \bibinfo{person}{Per-Ake
  Larson}, {and} \bibinfo{person}{J~Ian Munro}.}
  \bibinfo{year}{1985}\natexlab{}.
\newblock \showarticletitle{Robin hood hashing}. In
  \bibinfo{booktitle}{\emph{26th Annual Symposium on Foundations of Computer
  Science (FOCS)}}. \bibinfo{pages}{281--288}.
\newblock


\bibitem[\protect\citeauthoryear{Cepeda and Golab}{Cepeda and Golab}{2021}]%
        {phprx}
\bibfield{author}{\bibinfo{person}{Diego Cepeda} {and}
  \bibinfo{person}{Wojciech Golab}.} \bibinfo{year}{2021}\natexlab{}.
\newblock \showarticletitle{PHPRX: An Efficient Hash Table for Persistent
  Memory}. In \bibinfo{booktitle}{\emph{Proceedings of the 33rd ACM Symposium
  on Parallelism in Algorithms and Architectures}} (Virtual Event, USA)
  \emph{(\bibinfo{series}{SPAA '21})}. \bibinfo{publisher}{Association for
  Computing Machinery}, \bibinfo{address}{New York, NY, USA},
  \bibinfo{pages}{423–425}.
\newblock
\showISBNx{9781450380706}
\urldef\tempurl%
\url{https://doi.org/10.1145/3409964.3461820}
\showDOI{\tempurl}


\bibitem[\protect\citeauthoryear{Chandramouli, Prasaad, Kossmann, Levandoski,
  Hunter, and Barnett}{Chandramouli et~al\mbox{.}}{2018}]%
        {ChandramouliPrKo18}
\bibfield{author}{\bibinfo{person}{Badrish Chandramouli}, \bibinfo{person}{Guna
  Prasaad}, \bibinfo{person}{Donald Kossmann}, \bibinfo{person}{Justin
  Levandoski}, \bibinfo{person}{James Hunter}, {and} \bibinfo{person}{Mike
  Barnett}.} \bibinfo{year}{2018}\natexlab{}.
\newblock \showarticletitle{Faster: A concurrent key-value store with in-place
  updates}. In \bibinfo{booktitle}{\emph{Proceedings of the 2018 International
  Conference on Management of Data}}. \bibinfo{pages}{275--290}.
\newblock


\bibitem[\protect\citeauthoryear{Cooper, Silberstein, Tam, Ramakrishnan, and
  Sears}{Cooper et~al\mbox{.}}{2010}]%
        {CooperSiTa10}
\bibfield{author}{\bibinfo{person}{Brian~F Cooper}, \bibinfo{person}{Adam
  Silberstein}, \bibinfo{person}{Erwin Tam}, \bibinfo{person}{Raghu
  Ramakrishnan}, {and} \bibinfo{person}{Russell Sears}.}
  \bibinfo{year}{2010}\natexlab{}.
\newblock \showarticletitle{Benchmarking cloud serving systems with YCSB}. In
  \bibinfo{booktitle}{\emph{Proceedings of the 1st ACM symposium on Cloud
  computing}}. \bibinfo{pages}{143--154}.
\newblock


\bibitem[\protect\citeauthoryear{David, Guerraoui, and Trigonakis}{David
  et~al\mbox{.}}{2015}]%
        {david2015asynchronized}
\bibfield{author}{\bibinfo{person}{Tudor David}, \bibinfo{person}{Rachid
  Guerraoui}, {and} \bibinfo{person}{Vasileios Trigonakis}.}
  \bibinfo{year}{2015}\natexlab{}.
\newblock \showarticletitle{Asynchronized concurrency: The secret to scaling
  concurrent search data structures}.
\newblock \bibinfo{journal}{\emph{ACM SIGARCH Computer Architecture News}}
  \bibinfo{volume}{43}, \bibinfo{number}{1} (\bibinfo{year}{2015}),
  \bibinfo{pages}{631--644}.
\newblock


\bibitem[\protect\citeauthoryear{Debnath, Haghdoost, Kadav, Khatib, and
  Ungureanu}{Debnath et~al\mbox{.}}{2015}]%
        {pfht}
\bibfield{author}{\bibinfo{person}{Biplob Debnath}, \bibinfo{person}{Alireza
  Haghdoost}, \bibinfo{person}{Asim Kadav}, \bibinfo{person}{Mohammed~G.
  Khatib}, {and} \bibinfo{person}{Cristian Ungureanu}.}
  \bibinfo{year}{2015}\natexlab{}.
\newblock \showarticletitle{Revisiting Hash Table Design for Phase Change
  Memory}. In \bibinfo{booktitle}{\emph{Proceedings of the 3rd Workshop on
  Interactions of NVM/FLASH with Operating Systems and Workloads}} (Monterey,
  California) \emph{(\bibinfo{series}{INFLOW '15})}.
  \bibinfo{publisher}{Association for Computing Machinery},
  \bibinfo{address}{New York, NY, USA}, Article \bibinfo{articleno}{1},
  \bibinfo{numpages}{9}~pages.
\newblock
\showISBNx{9781450339452}
\urldef\tempurl%
\url{https://doi.org/10.1145/2819001.2819002}
\showDOI{\tempurl}


\bibitem[\protect\citeauthoryear{dynamo}{dynamo}{[n.d.]}]%
        {dynamo}
dynamo \bibinfo{year}{[n.d.]}\natexlab{}.
\newblock \bibinfo{title}{DynamoDB}.
\newblock \bibinfo{howpublished}{\url{https://aws.amazon.com/dynamodb/}}.
\newblock
\newblock
\shownote{Accessed: 2020-11-06.}


\bibitem[\protect\citeauthoryear{F14}{F14}{[n.d.]}]%
        {F14}
F14 \bibinfo{year}{[n.d.]}\natexlab{}.
\newblock \bibinfo{title}{Facebook's {F14} Hash Table}.
\newblock
  \bibinfo{howpublished}{\url{https://engineering.fb.com/2019/04/25/developer-tools/f14/}}.
\newblock
\newblock
\shownote{Accessed: 2020-11-06.}


\bibitem[\protect\citeauthoryear{Fagin, Nievergelt, Pippenger, and
  Strong}{Fagin et~al\mbox{.}}{1979}]%
        {fagin1979extendible}
\bibfield{author}{\bibinfo{person}{Ronald Fagin}, \bibinfo{person}{Jurg
  Nievergelt}, \bibinfo{person}{Nicholas Pippenger}, {and}
  \bibinfo{person}{H~Raymond Strong}.} \bibinfo{year}{1979}\natexlab{}.
\newblock \showarticletitle{Extendible hashing—a fast access method for
  dynamic files}.
\newblock \bibinfo{journal}{\emph{ACM Transactions on Database Systems (TODS)}}
  \bibinfo{volume}{4}, \bibinfo{number}{3} (\bibinfo{year}{1979}),
  \bibinfo{pages}{315--344}.
\newblock


\bibitem[\protect\citeauthoryear{Fan, Andersen, and Kaminsky}{Fan
  et~al\mbox{.}}{2013a}]%
        {FanAn13}
\bibfield{author}{\bibinfo{person}{Bin Fan}, \bibinfo{person}{David~G.
  Andersen}, {and} \bibinfo{person}{Michael Kaminsky}.}
  \bibinfo{year}{2013}\natexlab{a}.
\newblock \showarticletitle{{MemC3}: Compact and Concurrent {MemCache} with
  Dumber Caching and Smarter Hashing}. In \bibinfo{booktitle}{\emph{10th USENIX
  Symposium on Networked Systems Design and Implementation (NSDI 13)}}.
  \bibinfo{publisher}{USENIX Association}, \bibinfo{address}{Lombard, IL},
  \bibinfo{pages}{371--384}.
\newblock
\showISBNx{978-1-931971-00-3}
\urldef\tempurl%
\url{https://www.usenix.org/conference/nsdi13/technical-sessions/presentation/fan}
\showURL{%
\tempurl}


\bibitem[\protect\citeauthoryear{Fan, Andersen, and Kaminsky}{Fan
  et~al\mbox{.}}{2013b}]%
        {fan2013memc3}
\bibfield{author}{\bibinfo{person}{Bin Fan}, \bibinfo{person}{David~G
  Andersen}, {and} \bibinfo{person}{Michael Kaminsky}.}
  \bibinfo{year}{2013}\natexlab{b}.
\newblock \showarticletitle{Memc3: Compact and concurrent memcache with dumber
  caching and smarter hashing}. In \bibinfo{booktitle}{\emph{10th
  $\{$USENIX$\}$ Symposium on Networked Systems Design and Implementation
  ($\{$NSDI$\}$ 13)}}. \bibinfo{pages}{371--384}.
\newblock


\bibitem[\protect\citeauthoryear{googlesparse}{googlesparse}{[n.d.]}]%
        {googlesparse}
googlesparse \bibinfo{year}{[n.d.]}\natexlab{}.
\newblock \bibinfo{title}{Google's Sparse Hash}.
\newblock
  \bibinfo{howpublished}{\url{https://github.com/sparsehash/sparsehash}}.
\newblock
\newblock
\shownote{Accessed: 2020-11-06.}


\bibitem[\protect\citeauthoryear{Gunji and Goto}{Gunji and Goto}{1980}]%
        {originalstability}
\bibfield{author}{\bibinfo{person}{Takao Gunji} {and} \bibinfo{person}{Eiichi
  Goto}.} \bibinfo{year}{1980}\natexlab{}.
\newblock \showarticletitle{Studies on hashing part-1: A comparison of hashing
  algorithms with key deletion}.
\newblock \bibinfo{journal}{\emph{J. Information Processing}}
  \bibinfo{volume}{3}, \bibinfo{number}{1} (\bibinfo{year}{1980}),
  \bibinfo{pages}{1--12}.
\newblock


\bibitem[\protect\citeauthoryear{Hu, Chen, Wu, Sun, and Chen}{Hu
  et~al\mbox{.}}{2021}]%
        {HuCh21}
\bibfield{author}{\bibinfo{person}{Daokun Hu}, \bibinfo{person}{Zhiwen Chen},
  \bibinfo{person}{Jianbing Wu}, \bibinfo{person}{Jianhua Sun}, {and}
  \bibinfo{person}{Hao Chen}.} \bibinfo{year}{2021}\natexlab{}.
\newblock \showarticletitle{Persistent Memory Hash Indexes: An Experimental
  Evaluation}.
\newblock \bibinfo{journal}{\emph{Proc. VLDB Endow.}} \bibinfo{volume}{14},
  \bibinfo{number}{5} (\bibinfo{date}{Jan.} \bibinfo{year}{2021}),
  \bibinfo{pages}{785–798}.
\newblock
\showISSN{2150-8097}
\urldef\tempurl%
\url{https://doi.org/10.14778/3446095.3446101}
\showDOI{\tempurl}


\bibitem[\protect\citeauthoryear{Izraelevitz, Yang, Zhang, Kim, Liu,
  Memaripour, Soh, Wang, Xu, Dulloor, Zhao, and Swanson}{Izraelevitz
  et~al\mbox{.}}{2019}]%
        {pmem-measurements}
\bibfield{author}{\bibinfo{person}{Joseph Izraelevitz}, \bibinfo{person}{Jian
  Yang}, \bibinfo{person}{Lu Zhang}, \bibinfo{person}{Juno Kim},
  \bibinfo{person}{Xiao Liu}, \bibinfo{person}{Amir~Saman Memaripour},
  \bibinfo{person}{Yun~Joon Soh}, \bibinfo{person}{Zixuan Wang},
  \bibinfo{person}{Yi Xu}, \bibinfo{person}{Subramanya~R. Dulloor},
  \bibinfo{person}{Jishen Zhao}, {and} \bibinfo{person}{Steven Swanson}.}
  \bibinfo{year}{2019}\natexlab{}.
\newblock \showarticletitle{Basic Performance Measurements of the Intel Optane
  {DC} Persistent Memory Module}.
\newblock \bibinfo{journal}{\emph{CoRR}}  \bibinfo{volume}{abs/1903.05714}
  (\bibinfo{year}{2019}).
\newblock
\showeprint[arXiv]{1903.05714}
\urldef\tempurl%
\url{http://arxiv.org/abs/1903.05714}
\showURL{%
\tempurl}


\bibitem[\protect\citeauthoryear{Knuth}{Knuth}{1973}]%
        {KnuthVol3}
\bibfield{author}{\bibinfo{person}{Donald~E. Knuth}.}
  \bibinfo{year}{1973}\natexlab{}.
\newblock \bibinfo{booktitle}{\emph{The Art of Computer Programming, Volume
  {III:} Sorting and Searching}}.
\newblock \bibinfo{publisher}{Addison-Wesley}.
\newblock
\showISBNx{0-201-03803-X}


\bibitem[\protect\citeauthoryear{Lee, Mohan, Kashyap, Kim, and Chidambaram}{Lee
  et~al\mbox{.}}{2019}]%
        {LeeEtAl19-Recipe}
\bibfield{author}{\bibinfo{person}{Se~Kwon Lee}, \bibinfo{person}{Jayashree
  Mohan}, \bibinfo{person}{Sanidhya Kashyap}, \bibinfo{person}{Taesoo Kim},
  {and} \bibinfo{person}{Vijay Chidambaram}.} \bibinfo{year}{2019}\natexlab{}.
\newblock \showarticletitle{{RECIPE: Converting Concurrent DRAM Indexes to
  Persistent-Memory Indexes}}. In \bibinfo{booktitle}{\emph{Proceedings of the
  27th ACM Symposium on Operating Systems Principles (SOSP '19)}}.
  \bibinfo{address}{Ontario, Canada}.
\newblock


\bibitem[\protect\citeauthoryear{Lev, Luchangco, and Olszewski}{Lev
  et~al\mbox{.}}{2009}]%
        {DBLP:conf/spaa/LevLO09}
\bibfield{author}{\bibinfo{person}{Yossi Lev}, \bibinfo{person}{Victor
  Luchangco}, {and} \bibinfo{person}{Marek Olszewski}.}
  \bibinfo{year}{2009}\natexlab{}.
\newblock \showarticletitle{Scalable reader-writer locks}. In
  \bibinfo{booktitle}{\emph{{SPAA}}}. \bibinfo{publisher}{{ACM}},
  \bibinfo{pages}{101--110}.
\newblock


\bibitem[\protect\citeauthoryear{Li, Andersen, Kaminsky, and Freedman}{Li
  et~al\mbox{.}}{2014}]%
        {LiAn14}
\bibfield{author}{\bibinfo{person}{Xiaozhou Li}, \bibinfo{person}{David~G
  Andersen}, \bibinfo{person}{Michael Kaminsky}, {and}
  \bibinfo{person}{Michael~J Freedman}.} \bibinfo{year}{2014}\natexlab{}.
\newblock \showarticletitle{Algorithmic improvements for fast concurrent cuckoo
  hashing}. In \bibinfo{booktitle}{\emph{Proceedings of the Ninth European
  Conference on Computer Systems}}. \bibinfo{pages}{1--14}.
\newblock


\bibitem[\protect\citeauthoryear{Lim, Fan, Andersen, and Kaminsky}{Lim
  et~al\mbox{.}}{2011}]%
        {lim2011silt}
\bibfield{author}{\bibinfo{person}{Hyeontaek Lim}, \bibinfo{person}{Bin Fan},
  \bibinfo{person}{David~G Andersen}, {and} \bibinfo{person}{Michael
  Kaminsky}.} \bibinfo{year}{2011}\natexlab{}.
\newblock \showarticletitle{SILT: A memory-efficient, high-performance
  key-value store}. In \bibinfo{booktitle}{\emph{Proceedings of the
  Twenty-Third ACM Symposium on Operating Systems Principles}}. ACM,
  \bibinfo{pages}{1--13}.
\newblock


\bibitem[\protect\citeauthoryear{Lim, Han, Andersen, and Kaminsky}{Lim
  et~al\mbox{.}}{2014}]%
        {lim2014mica}
\bibfield{author}{\bibinfo{person}{Hyeontaek Lim}, \bibinfo{person}{Dongsu
  Han}, \bibinfo{person}{David~G Andersen}, {and} \bibinfo{person}{Michael
  Kaminsky}.} \bibinfo{year}{2014}\natexlab{}.
\newblock \showarticletitle{$\{$MICA$\}$: A holistic approach to fast in-memory
  key-value storage}. In \bibinfo{booktitle}{\emph{11th $\{$USENIX$\}$
  Symposium on Networked Systems Design and Implementation ($\{$NSDI$\}$ 14)}}.
  \bibinfo{pages}{429--444}.
\newblock


\bibitem[\protect\citeauthoryear{Lu, Hao, Wang, and Lo}{Lu
  et~al\mbox{.}}{2020}]%
        {LuHaWa20}
\bibfield{author}{\bibinfo{person}{Baotong Lu}, \bibinfo{person}{Xiangpeng
  Hao}, \bibinfo{person}{Tianzheng Wang}, {and} \bibinfo{person}{Eric Lo}.}
  \bibinfo{year}{2020}\natexlab{}.
\newblock \showarticletitle{Dash: Scalable Hashing on Persistent Memory}.
\newblock \bibinfo{journal}{\emph{Proc. {VLDB} Endow.}} \bibinfo{volume}{13},
  \bibinfo{number}{8} (\bibinfo{year}{2020}), \bibinfo{pages}{1147--1161}.
\newblock
\urldef\tempurl%
\url{https://doi.org/10.14778/3389133.3389134}
\showDOI{\tempurl}


\bibitem[\protect\citeauthoryear{Maier, Sanders, and Dementiev}{Maier
  et~al\mbox{.}}{2016}]%
        {MaierSaDe16}
\bibfield{author}{\bibinfo{person}{Tobias Maier}, \bibinfo{person}{Peter
  Sanders}, {and} \bibinfo{person}{Roman Dementiev}.}
  \bibinfo{year}{2016}\natexlab{}.
\newblock \showarticletitle{Concurrent hash tables: fast \emph{and}
  general?(!)}. In \bibinfo{booktitle}{\emph{Proceedings of the 21st {ACM}
  {SIGPLAN} Symposium on Principles and Practice of Parallel Programming, PPoPP
  2016, Barcelona, Spain, March 12-16, 2016}},
  \bibfield{editor}{\bibinfo{person}{Rafael Asenjo} {and} \bibinfo{person}{Tim
  Harris}} (Eds.). \bibinfo{publisher}{{ACM}}, \bibinfo{pages}{34:1--34:2}.
\newblock
\urldef\tempurl%
\url{https://doi.org/10.1145/2851141.2851188}
\showDOI{\tempurl}


\bibitem[\protect\citeauthoryear{Maier, Sanders, and Dementiev}{Maier
  et~al\mbox{.}}{2019a}]%
        {MaierSaDe19}
\bibfield{author}{\bibinfo{person}{Tobias Maier}, \bibinfo{person}{Peter
  Sanders}, {and} \bibinfo{person}{Roman Dementiev}.}
  \bibinfo{year}{2019}\natexlab{a}.
\newblock \showarticletitle{Concurrent Hash Tables: Fast and General(?)!}
\newblock \bibinfo{journal}{\emph{{ACM} Trans. Parallel Comput.}}
  \bibinfo{volume}{5}, \bibinfo{number}{4} (\bibinfo{year}{2019}),
  \bibinfo{pages}{16:1--16:32}.
\newblock
\urldef\tempurl%
\url{https://doi.org/10.1145/3309206}
\showDOI{\tempurl}


\bibitem[\protect\citeauthoryear{Maier, Sanders, and Walzer}{Maier
  et~al\mbox{.}}{2019b}]%
        {maier2019dynamic}
\bibfield{author}{\bibinfo{person}{Tobias Maier}, \bibinfo{person}{Peter
  Sanders}, {and} \bibinfo{person}{Stefan Walzer}.}
  \bibinfo{year}{2019}\natexlab{b}.
\newblock \showarticletitle{Dynamic space efficient hashing}.
\newblock \bibinfo{journal}{\emph{Algorithmica}} \bibinfo{volume}{81},
  \bibinfo{number}{8} (\bibinfo{year}{2019}), \bibinfo{pages}{3162--3185}.
\newblock


\bibitem[\protect\citeauthoryear{Memcached}{Memcached}{[n.d.]}]%
        {memcached}
Memcached \bibinfo{year}{[n.d.]}\natexlab{}.
\newblock \bibinfo{title}{Memcached}.
\newblock \bibinfo{howpublished}{\url{https://memcached.org/}}.
\newblock
\newblock
\shownote{Accessed: 2020-11-06.}


\bibitem[\protect\citeauthoryear{Metreveli, Zeldovich, and Kaashoek}{Metreveli
  et~al\mbox{.}}{2012}]%
        {MetreveliZe12}
\bibfield{author}{\bibinfo{person}{Zviad Metreveli}, \bibinfo{person}{Nickolai
  Zeldovich}, {and} \bibinfo{person}{M.~Frans Kaashoek}.}
  \bibinfo{year}{2012}\natexlab{}.
\newblock \showarticletitle{CPHASH: A Cache-Partitioned Hash Table}. In
  \bibinfo{booktitle}{\emph{Proceedings of the 17th ACM SIGPLAN Symposium on
  Principles and Practice of Parallel Programming}} (New Orleans, Louisiana,
  USA) \emph{(\bibinfo{series}{PPoPP '12})}. \bibinfo{publisher}{Association
  for Computing Machinery}, \bibinfo{address}{New York, NY, USA},
  \bibinfo{pages}{319–320}.
\newblock
\showISBNx{9781450311601}
\urldef\tempurl%
\url{https://doi.org/10.1145/2145816.2145874}
\showDOI{\tempurl}


\bibitem[\protect\citeauthoryear{Mitzenmacher and Upfal}{Mitzenmacher and
  Upfal}{2005}]%
        {DBLP:books/daglib/0012859}
\bibfield{author}{\bibinfo{person}{Michael Mitzenmacher} {and}
  \bibinfo{person}{Eli Upfal}.} \bibinfo{year}{2005}\natexlab{}.
\newblock \bibinfo{booktitle}{\emph{Probability and Computing: Randomized
  Algorithms and Probabilistic Analysis}}.
\newblock \bibinfo{publisher}{Cambridge University Press}.
\newblock


\bibitem[\protect\citeauthoryear{mongo}{mongo}{[n.d.]}]%
        {mongo}
mongo \bibinfo{year}{[n.d.]}\natexlab{}.
\newblock \bibinfo{title}{MongoDB}.
\newblock \bibinfo{howpublished}{\url{https://www.mongodb.com/}}.
\newblock
\newblock
\shownote{Accessed: 2020-11-06.}


\bibitem[\protect\citeauthoryear{mremap}{mremap}{[n.d.]}]%
        {mremap}
mremap \bibinfo{year}{[n.d.]}\natexlab{}.
\newblock \bibinfo{title}{Linux Programmer's Manual}.
\newblock
  \bibinfo{howpublished}{\url{https://man7.org/linux/man-pages/man2/mremap.2.html}}.
\newblock
\newblock
\shownote{Accessed: 2021-09-14.}


\bibitem[\protect\citeauthoryear{Nam, Cha, ri~Choi, Noh, and Nam}{Nam
  et~al\mbox{.}}{2019}]%
        {cceh}
\bibfield{author}{\bibinfo{person}{Moohyeon Nam}, \bibinfo{person}{Hokeun Cha},
  \bibinfo{person}{Young ri Choi}, \bibinfo{person}{Sam~H. Noh}, {and}
  \bibinfo{person}{Beomseok Nam}.} \bibinfo{year}{2019}\natexlab{}.
\newblock \showarticletitle{Write-Optimized Dynamic Hashing for Persistent
  Memory}. In \bibinfo{booktitle}{\emph{17th {USENIX} Conference on File and
  Storage Technologies ({FAST} 19)}}. \bibinfo{publisher}{{USENIX}
  Association}, \bibinfo{address}{Boston, MA}, \bibinfo{pages}{31--44}.
\newblock
\showISBNx{978-1-939133-09-0}
\urldef\tempurl%
\url{https://www.usenix.org/conference/fast19/presentation/nam}
\showURL{%
\tempurl}


\bibitem[\protect\citeauthoryear{Neal, Zuo, Shiple, Khan, Kwon, Peter, and
  Kasikci}{Neal et~al\mbox{.}}{2021}]%
        {NealZu21}
\bibfield{author}{\bibinfo{person}{Ian Neal}, \bibinfo{person}{Gefei Zuo},
  \bibinfo{person}{Eric Shiple}, \bibinfo{person}{Tanvir~Ahmed Khan},
  \bibinfo{person}{Youngjin Kwon}, \bibinfo{person}{Simon Peter}, {and}
  \bibinfo{person}{Baris Kasikci}.} \bibinfo{year}{2021}\natexlab{}.
\newblock \showarticletitle{Rethinking File Mapping for Persistent Memory}. In
  \bibinfo{booktitle}{\emph{19th USENIX Conference on File and Storage
  Technologies (FAST 21)}}. \bibinfo{publisher}{USENIX Association},
  \bibinfo{pages}{97--111}.
\newblock
\showISBNx{978-1-939133-20-5}
\urldef\tempurl%
\url{https://www.usenix.org/conference/fast21/presentation/neal}
\showURL{%
\tempurl}


\bibitem[\protect\citeauthoryear{Pagh and Rodler}{Pagh and Rodler}{2001}]%
        {PaghRo01}
\bibfield{author}{\bibinfo{person}{Rasmus Pagh} {and}
  \bibinfo{person}{Flemming~Friche Rodler}.} \bibinfo{year}{2001}\natexlab{}.
\newblock \showarticletitle{Cuckoo Hashing}. In
  \bibinfo{booktitle}{\emph{Algorithms --- ESA 2001}},
  \bibfield{editor}{\bibinfo{person}{Friedhelm~Meyer auf~der Heide}} (Ed.).
  \bibinfo{publisher}{Springer Berlin Heidelberg}, \bibinfo{address}{Berlin,
  Heidelberg}, \bibinfo{pages}{121--133}.
\newblock


\bibitem[\protect\citeauthoryear{Pagh and Rodler}{Pagh and Rodler}{2004}]%
        {Pagh:CuckooHash}
\bibfield{author}{\bibinfo{person}{Rasmus Pagh} {and}
  \bibinfo{person}{Flemming~Friche Rodler}.} \bibinfo{year}{2004}\natexlab{}.
\newblock \showarticletitle{Cuckoo Hashing}.
\newblock \bibinfo{journal}{\emph{J. Algorithms}} \bibinfo{volume}{51},
  \bibinfo{number}{2} (\bibinfo{date}{May} \bibinfo{year}{2004}),
  \bibinfo{pages}{122--144}.
\newblock
\showISSN{0196-6774}
\urldef\tempurl%
\url{https://doi.org/10.1016/j.jalgor.2003.12.002}
\showDOI{\tempurl}


\bibitem[\protect\citeauthoryear{Pheatt}{Pheatt}{2008}]%
        {Pheatt08}
\bibfield{author}{\bibinfo{person}{Chuck Pheatt}.}
  \bibinfo{year}{2008}\natexlab{}.
\newblock \showarticletitle{Intel{\textregistered} threading building blocks}.
\newblock \bibinfo{journal}{\emph{Journal of Computing Sciences in Colleges}}
  \bibinfo{volume}{23}, \bibinfo{number}{4} (\bibinfo{year}{2008}),
  \bibinfo{pages}{298--298}.
\newblock


\bibitem[\protect\citeauthoryear{pmdk}{pmdk}{[n.d.]}]%
        {pmdk}
pmdk \bibinfo{year}{[n.d.]}\natexlab{}.
\newblock \bibinfo{title}{Intel. Persistent Memory Development Kit.}
\newblock \bibinfo{howpublished}{\url{http://pmem.io/pmdk/libpmem/}}.
\newblock
\newblock
\shownote{Accessed August 2, 2021.}


\bibitem[\protect\citeauthoryear{Redis}{Redis}{[n.d.]}]%
        {redis}
Redis \bibinfo{year}{[n.d.]}\natexlab{}.
\newblock \bibinfo{title}{Redis}.
\newblock \bibinfo{howpublished}{\url{https://redis.io/}}.
\newblock
\newblock
\shownote{Accessed: 2020-11-06.}


\bibitem[\protect\citeauthoryear{Sanders}{Sanders}{2018}]%
        {sandersstability}
\bibfield{author}{\bibinfo{person}{Peter Sanders}.}
  \bibinfo{year}{2018}\natexlab{}.
\newblock \showarticletitle{Hashing with Linear Probing and Referential
  Integrity}.
\newblock \bibinfo{journal}{\emph{arXiv preprint arXiv:1808.04602}}
  (\bibinfo{year}{2018}).
\newblock


\bibitem[\protect\citeauthoryear{Schwalb, Dreseler, Uflacker, and
  Plattner}{Schwalb et~al\mbox{.}}{2015}]%
        {nvc-hashmap}
\bibfield{author}{\bibinfo{person}{David Schwalb}, \bibinfo{person}{Markus
  Dreseler}, \bibinfo{person}{Matthias Uflacker}, {and} \bibinfo{person}{Hasso
  Plattner}.} \bibinfo{year}{2015}\natexlab{}.
\newblock \showarticletitle{NVC-Hashmap: A Persistent and Concurrent Hashmap
  For Non-Volatile Memories}. In \bibinfo{booktitle}{\emph{Proceedings of the
  3rd VLDB Workshop on In-Memory Data Mangement and Analytics}} (Kohala Coast,
  HI, USA) \emph{(\bibinfo{series}{IMDM '15})}. \bibinfo{publisher}{Association
  for Computing Machinery}, \bibinfo{address}{New York, NY, USA}, Article
  \bibinfo{articleno}{4}, \bibinfo{numpages}{8}~pages.
\newblock
\showISBNx{9781450337137}
\urldef\tempurl%
\url{https://doi.org/10.1145/2803140.2803144}
\showDOI{\tempurl}


\bibitem[\protect\citeauthoryear{V{\"o}cking}{V{\"o}cking}{2003}]%
        {vocking2003asymmetry}
\bibfield{author}{\bibinfo{person}{Berthold V{\"o}cking}.}
  \bibinfo{year}{2003}\natexlab{}.
\newblock \showarticletitle{How asymmetry helps load balancing}.
\newblock \bibinfo{journal}{\emph{Journal of the ACM (JACM)}}
  \bibinfo{volume}{50}, \bibinfo{number}{4} (\bibinfo{year}{2003}),
  \bibinfo{pages}{568--589}.
\newblock


\bibitem[\protect\citeauthoryear{Yang, Kim, Hoseinzadeh, Izraelevitz, and
  Swanson}{Yang et~al\mbox{.}}{2020}]%
        {YangKi20}
\bibfield{author}{\bibinfo{person}{Jian Yang}, \bibinfo{person}{Juno Kim},
  \bibinfo{person}{Morteza Hoseinzadeh}, \bibinfo{person}{Joseph Izraelevitz},
  {and} \bibinfo{person}{Steve Swanson}.} \bibinfo{year}{2020}\natexlab{}.
\newblock \showarticletitle{An Empirical Guide to the Behavior and Use of
  Scalable Persistent Memory}. In \bibinfo{booktitle}{\emph{18th USENIX
  Conference on File and Storage Technologies (FAST 20)}}.
  \bibinfo{publisher}{USENIX Association}, \bibinfo{address}{Santa Clara, CA},
  \bibinfo{pages}{169--182}.
\newblock
\showISBNx{978-1-939133-12-0}
\urldef\tempurl%
\url{https://www.usenix.org/conference/fast20/presentation/yang}
\showURL{%
\tempurl}


\bibitem[\protect\citeauthoryear{Zuo and Hua}{Zuo and Hua}{2017}]%
        {zuo2017write}
\bibfield{author}{\bibinfo{person}{Pengfei Zuo} {and} \bibinfo{person}{Yu
  Hua}.} \bibinfo{year}{2017}\natexlab{}.
\newblock \showarticletitle{A write-friendly and cache-optimized hashing scheme
  for non-volatile memory systems}.
\newblock \bibinfo{journal}{\emph{IEEE Transactions on Parallel and Distributed
  Systems}} \bibinfo{volume}{29}, \bibinfo{number}{5} (\bibinfo{year}{2017}),
  \bibinfo{pages}{985--998}.
\newblock


\bibitem[\protect\citeauthoryear{Zuo and Hua}{Zuo and Hua}{2018}]%
        {path-hashing}
\bibfield{author}{\bibinfo{person}{Pengfei Zuo} {and} \bibinfo{person}{Yu
  Hua}.} \bibinfo{year}{2018}\natexlab{}.
\newblock \showarticletitle{A Write-Friendly and Cache-Optimized Hashing Scheme
  for Non-Volatile Memory Systems}.
\newblock \bibinfo{journal}{\emph{IEEE Transactions on Parallel and Distributed
  Systems}} \bibinfo{volume}{29}, \bibinfo{number}{5} (\bibinfo{year}{2018}),
  \bibinfo{pages}{985--998}.
\newblock
\urldef\tempurl%
\url{https://doi.org/10.1109/TPDS.2017.2782251}
\showDOI{\tempurl}


\bibitem[\protect\citeauthoryear{Zuo, Hua, and Wu}{Zuo et~al\mbox{.}}{2018}]%
        {zuo2018write}
\bibfield{author}{\bibinfo{person}{Pengfei Zuo}, \bibinfo{person}{Yu Hua},
  {and} \bibinfo{person}{Jie Wu}.} \bibinfo{year}{2018}\natexlab{}.
\newblock \showarticletitle{Write-optimized and high-performance hashing index
  scheme for persistent memory}. In \bibinfo{booktitle}{\emph{13th
  $\{$USENIX$\}$ Symposium on Operating Systems Design and Implementation
  ($\{$OSDI$\}$ 18)}}. \bibinfo{pages}{461--476}.
\newblock


\bibitem[\protect\citeauthoryear{Zuo, Hua, and Wu}{Zuo et~al\mbox{.}}{2019}]%
        {level-hashing}
\bibfield{author}{\bibinfo{person}{Pengfei Zuo}, \bibinfo{person}{Yu Hua},
  {and} \bibinfo{person}{Jie Wu}.} \bibinfo{year}{2019}\natexlab{}.
\newblock \showarticletitle{Level Hashing: A High-Performance and
  Flexible-Resizing Persistent Hashing Index Structure}.
\newblock \bibinfo{journal}{\emph{ACM Trans. Storage}} \bibinfo{volume}{15},
  \bibinfo{number}{2}, Article \bibinfo{articleno}{13} (\bibinfo{date}{June}
  \bibinfo{year}{2019}), \bibinfo{numpages}{30}~pages.
\newblock
\showISSN{1553-3077}
\urldef\tempurl%
\url{https://doi.org/10.1145/3322096}
\showDOI{\tempurl}


\end{thebibliography}
\end{document}